\documentclass[11pt,a4paper]{article}
\usepackage[margin=2.5cm,marginparwidth=2cm]{geometry}
\usepackage{lineno}
\usepackage{graphicx}
\usepackage{subcaption}
\usepackage{amsmath}
\usepackage{amsthm}               
\usepackage{enumitem}
\usepackage{newtxtext}
\usepackage{newtxmath}
\usepackage{natbib}
\usepackage{xcolor}
\usepackage{comment}
\usepackage[disable]{todonotes}   
\usepackage{hyperref}             
\hypersetup{
    colorlinks = true,
    urlcolor   = blue,
    citecolor  = black,
    pdftitle   = {Instability of exact finite-amplitude acoustic-gravity waves},
    pdfauthor  = {Christian Puntini}}
 
\newcommand{\e}{\mathfrak{e}}                 
\newcommand{\cD}{\mathcal{D}}                 
\newcommand{\bU}{\mathbf{U}}
\newcommand{\fU}{\boldsymbol{\mathfrak{U}}}   
\newcommand{\tM}{\widetilde{\mathfrak{M}}}    
\newcommand{\bxi}{\boldsymbol{\xi}}
\newcommand{\bn}{\mathbf{n}}
\newcommand{\bq}{\mathbf{q}}
\newcommand{\bp}{\mathbf{p}}
\newcommand{\bV}{\mathbf{V}}
\newcommand{\tV}{\widetilde{\mathbf{V}}}
\newcommand{\be}{\mathbf{e}}
\newcommand{\rh}{{\mathfrak{R}}}
\newcommand{\ech}{\hat{\e}_c}
 
\newtheorem{theorem}{Theorem}[section]
\newtheorem{proposition}[theorem]{Proposition}

\theoremstyle{remark}
\newtheorem{remark}[theorem]{Remark}
 
\title{Instability of exact finite-amplitude acoustic-gravity waves}
\author{Christian Puntini\thanks{Email: \href{mailto:christian.puntini@univie.ac.at}{christian.puntini@univie.ac.at}}\\[4pt]
  \small Faculty of Mathematics, University of Vienna\\
  \small Oskar-Morgenstern-Platz 1, 1090 Vienna, Austria}
\date{}
 
\begin{document}
\maketitle
 
\begin{abstract}
\noindent
Godin (\textit{J. Fluid Mech.}, vol. 767, 2015, pp. 52--64) showed that compressible fluids in a uniform gravity field support finite-amplitude incompressible wave motions, in which every parcel conserves pressure and density, and constructed exact Gerstner-type surface and edge waves, possibly carried by currents. Using the short-wavelength instability method for compressible flows, we show that these waves are linearly unstable above a critical steepness. The stability of each fluid level reduces to a Floquet problem depending on two parameters: the steepness $\e=e^{ka_2}$ and $\nu^2=(1-\e^2)\nu_0^2$, where $\nu_0^2=\cos^2\alpha\,N_0^2/\Omega^2$, $N_0$ is the buoyancy frequency of the hydrostatic reference state, $\alpha$ the inclination of the plane of motion and $\Omega=(gk\cos\alpha)^{1/2}$ the intrinsic orbital frequency. In an isentropic fluid, a level is unstable if $\e>1/3$, recovering Leblanc's threshold for Gerstner waves (\textit{J. Fluid Mech.}, vol. 506, 2004, pp. 245--254) and extending it to compressible surface and edge waves; weak stable stratification lowers the threshold to $\frac13-\frac{32}{81}\nu_0^2+O(\nu_0^4)$. In an isothermal atmosphere, $\nu_0^2=\kappa\cos\alpha/(kH)$, where $H$ is the pressure scale height. For horizontal wavelengths near $5.5\,H/\cos\alpha$, principal subharmonic resonance makes the critical steepness vanish. For the fast airglow wave observed by Hecht et al. (\textit{J. Geophys. Res.}, vol. 107, 2002, D20, 4444), which Godin interpreted with his solution, the lidar temperature places the layer in which this interpretation is consistent on this resonance, where the critical steepness is typically 0.1 instead of $1/3$. Instability is confined to a layer beneath the free surface of surface waves and to a wedge at the shoreline of edge waves.
 
\medskip
\noindent\textbf{MSC codes:} 76N30, 76E20, 86A10
\end{abstract}
 
\section{Introduction}\label{sec:intro}

\citet{G} showed that a compressible fluid in a uniform gravity field supports finite-amplitude wave motions that are {incompressible} in the sense that $\nabla\cdot\bU=0$, while pressure, density and entropy are conserved by each parcel: $DP/Dt=D\rho/Dt=DS/Dt=0$. The kinematics of these motions is that of Gerstner's wave; compressibility only shapes the distribution of pressure and density across the trochoidal isobars. Godin's family contains surface waves, edge waves along a sloping rigid wall, waves on uniform and piecewise-uniform currents, and waves carrying a transverse current.\\
\noindent
{To the best of our knowledge, the stability of these waves has not been studied. For Gerstner's wave in a homogeneous liquid, \citet{L04} showed with the short-wavelength method that a level is unstable when its steepness exceeds $1/3$, and \citet{IK14} extended the analysis to edge waves along a sloping beach. In \citet{P}, the exact mountain waves of \citet{C23} were analysed with the short-wavelength method of \citet{LH91} and \citet{LL92} for compressible flows. That analysis rests on three facts: the basic flow is divergence-free, $P$ and $\rho$ are constant along trajectories, and the particle paths are Gerstner trochoids. These are precisely the defining properties of Godin's class. Here we show that the stability of Godin's waves reduces exactly, level by level, to the Floquet problem of \citet{P}, and that the stratification enters through a parameter $\nu_0$ fixed by the hydrostatic reference state. Unlike the mountain waves of \citet{P}, whose stratification parameter is small, fast acoustic-gravity waves have $\nu_0$ of order $1/2$, where a parametric resonance makes the critical steepness collapse. The main results are the following. }
\begin{enumerate}[label=(\roman*),leftmargin=*]
\item \emph{Exact reduction.} At every level of the fluid, perturbations with wavevector normal to the plane of motion obey the amplitude system of \citet{P}, with steepness $\e=e^{ka_2}$ and $\nu^2=(1-\e^2)\nu_0^2$, $\nu_0^2=\cos^2\!\alpha\,N_0^2/\Omega^2$. Here $N_0$ is the buoyancy frequency of the hydrostatic reference state evaluated on the isobar, $\alpha$ is the angle between the upward vertical and the $x_3$-axis of the plane of motion ($\alpha=0$ for surface waves), and $\Omega=(gk\cos\alpha)^{1/2}$ is the intrinsic orbital frequency. Uniform currents drop out (Proposition~\ref{prop:equiv}, Theorem~\ref{thm:main}, Section~\ref{sec:currents}).
\item \emph{Compressible surface and edge waves.} In an isentropic fluid (homogeneous incompressible liquid, adiabatic atmosphere) a level is unstable if and only if $\e>1/3$, with an explicit growth rate; this recovers the threshold of \citet{L04} for Gerstner waves. A surface wave is therefore unstable for $H_w/\lambda>1/(3\pi)$, a height below that of the highest Stokes wave, and an edge wave, in a wedge at the shoreline, as soon as the shoreline oscillates by more than $\lambda\sin s/(6\pi)$ on a beach of slope $s$ (\S\S\,\ref{cor:isentropic} and~\ref{sec:where}).
\item \emph{Resonant collapse of the threshold.} Weak stable stratification lowers the threshold to $1/3-\frac{32}{81}\nu_0^2+O(\nu_0^4)$ (\S\,\ref{cor:weak}), and the threshold vanishes at the parametric resonances $\nu_0=n/2$ (\S\,\ref{sec:numerics}). In an isothermal atmosphere $\nu_0^2=\kappa\cos\alpha/(kH)$, so the whole wave field lies on one horizontal line of the stability diagram; the principal subharmonic resonance $\nu_0=1/2$ falls at horizontal wavelengths $\lambda\approx5.5\,H/\cos\alpha$, about 30--33\,km near the mesopause, in the range of the fast acoustic-gravity waves considered by \citet{G} (\S\,\ref{cor:poly}).
\item \emph{The wave of \citet{H02}.} For this fast airglow wave, which Godin interpreted with his solution, the lidar profiles measured during the observation show that the interpretation is consistent only between about 98 and 108\,km, above the airglow layer, and that the principal resonance lies inside this layer: there the critical steepness is typically $0.1$ instead of $1/3$, and a steepness of $0.2$ would give e-folding times of 6--9\,min. The unaveraged lidar profiles could decide whether the wave was steep enough (Section~\ref{sec:physics}).
\end{enumerate}
\noindent
The paper is organised as follows. Section~\ref{sec:godin} recalls the equations of motion and Godin's solutions.  Section~\ref{instability} adapts the short-wavelength method to these flows, reduces the problem to the amplitude system of \citet{P} and states the instability results for Godin's solution in \S \ref{sec:results}. Moreover, \S~\ref{sec:numerics} presents the numerical stability diagram, and \S~\ref{sec:where}  locates the unstable region in surface and edge waves. Section~\ref{sec:currents} discusses the role of currents and Section~\ref{sec:physics} the physical estimates, including the wave observed by \citet{H02}. Section~\ref{sec:discussion} summarises and discusses the results.

\section{Equations of motion and Godin's exact solutions}\label{sec:godin}
The motion of an inviscid compressible fluid in a uniform, time-independent gravity field $\mathbf{g}$ is governed by the following set of (partial differential) equations
\begin{align}
    D\mathbf{U}/Dt=\mathbf{g} - \rho^{-1}\nabla P,  \label{2.1}\\
    D\rho/Dt + \rho\nabla\cdot\mathbf{U}=0,\label{2.2}\\ 
    P=P(\rho,S). \label{2.3}
\end{align}
In \eqref{2.1}, \eqref{2.2}, and \eqref{2.3}, $\mathbf{U}=(U_1,U_2,U_3)$ denotes the fluid velocity, $P$ the pressure, $\rho$ the fluid density, and $S$ its entropy, all regarded as functions of the spatial coordinates $\mathbf{x}=(x_1,x_2,x_3)$. Equations \eqref{2.1} and \eqref{2.2} express, respectively, the conservation of momentum and mass, while \eqref{2.3} represents the equation of state, all formulated in the Eulerian framework.\\
Moreover, the gravity is $\mathbf g=(0,-g\sin\alpha,-g\cos\alpha)$ with $0\le\alpha<\pi/2$ (the angle $\alpha$ is called $\theta$ in \citet{G}, and we reserve $\theta$ for the orbital phase of \citet{P}). For $\alpha=0$ the $x_3$-axis is vertical and the waves are surface waves. For $\alpha>0$ the plane $x_2=0$ is a rigid wall inclined at $\pi/2-\alpha$ to the horizontal, the fluid occupies $x_2\ge0$, and the waves are edge waves; the coordinates are shown in Figure \ref{fig:entrambe}. In Lagrangian labels $\mathbf a:=(a_1,a_2,a_3)$ the particle paths of \citet[(4.1), (4.10)]{G} read
\begin{equation}\label{eq:paths}
\left\{\begin{aligned}
    x_1&=a_1+vt+\frac1k e^{ka_2}\sin\varphi,\\ x_2&=u(a_1,a_2)t-a_3,\\ x_3&=a_2-\frac1k e^{ka_2}\cos\varphi,
\end{aligned}\right.\qquad\text{where}\quad \varphi=k(a_1+c_*t),
\end{equation}
with $c_*:=c+v$ and $(a_1,\,a_2,\,a_3)\in(-\infty, +\infty)\times(-\infty,\, a_2^{(0)})\times(-\infty, +\infty)$, with $a_2^{(0)}<0$. Here $v$ and $u$ are the currents along and across the direction of propagation, and the dispersion relation is
\begin{equation}\label{eq:disp}
k\,c_*^2=g\cos\alpha,\qquad \Omega:=k|c_*|=\sqrt{gk\cos\alpha}.
\end{equation}
\begin{figure}
    \centering

    \begin{subfigure}{0.55\textwidth}
        \centering
        \includegraphics[width=\linewidth]{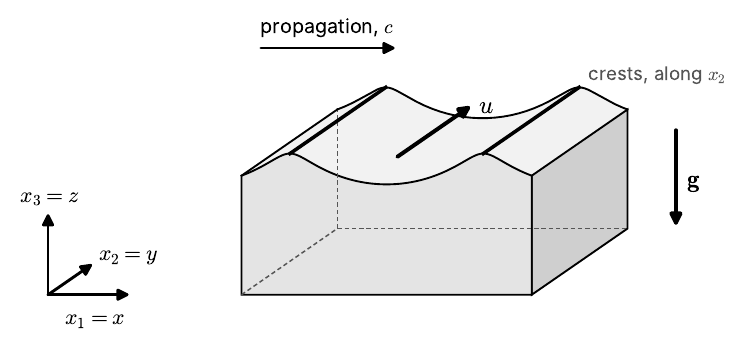}
       \caption{\empty}
        \label{fig:prima}
    \end{subfigure}
    \hfill
    \begin{subfigure}{0.4\textwidth}
        \centering
        \includegraphics[width=\linewidth]{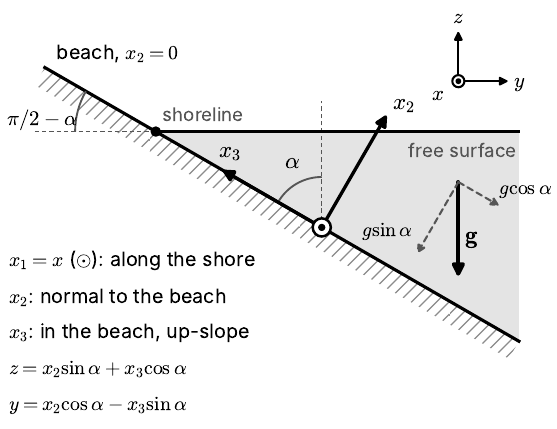}
       \caption{\empty}
        \label{fig:seconda}
    \end{subfigure}

    \caption{Coordinates of Godin's waves \eqref{eq:paths}. (\textit{a}) Surface waves ($\alpha=0$): the waves propagate along $x_1$, their crests are parallel to $x_2$, which is also the direction of the transverse current $u$, and $x_3$ is vertical. (\textit{b}) Edge waves ($\alpha>0$), in a section normal to the shoreline: $x_1$ is along the shore, $x_2$ is normal to the beach $x_2=0$ and points into the fluid, and $x_3$ lies in the beach and points up-slope. The beach is inclined at $\pi/2-\alpha$ to the horizontal, and $\mathbf g=(0,-g\sin\alpha,-g\cos\alpha)$.}
    \label{fig:entrambe}
\end{figure}
\noindent
The analysis carried out by \cite{G} considered two cases:
\begin{enumerate}
    \item\ waves in a quiescent fluid: in this case $u=v=0$, so that $c_*=c$ and the dispersion relation \eqref{eq:disp} reads as $kc^2=g\cos\alpha$. For $\alpha\not=0$ and $a_3\leq0$, \eqref{eq:paths} describes edge waves along the sloping rigid boundary given by $x_2=0$ (see also \citealp{Yih, C2001}).\\
    If $\alpha=0$, the particle paths in \eqref{eq:paths} characterise a surface wave in a fluid half-space (with surface at $a_2=a_2^{(0)}$) if $-\infty<a_3<+\infty$, or a wave that propagates along a rigid vertical boundary when $-\infty<a_3\leq0$.
    \item\ waves in a background flow: set $\alpha=0$ and $u$ and/or $v$ not zero, and assume $-\infty<a_{1,\,3}<+\infty$ and $-\infty<a_2\leq a_2^{(0)}<0$. 
\end{enumerate} 
Writing $\e:=e^{ka_2}$ for the steepness of the level $a_2$ and $\cD:=1-\e^2$, the velocity associated with \eqref{eq:paths} is
\begin{equation}\label{eq:velocity}
\bU=\bigl(v+\e c_*\cos\varphi,\ u,\ \e c_*\sin\varphi\bigr);
\end{equation}
it depends on $(x_1,x_3,t)$ only, never on $x_2$. The Jacobian matrix of \eqref{eq:paths},
with entries $(\partial x_i/\partial a_j)$, is given by
\begin{equation}\label{eq:jac}
\left[\frac{\partial(x_1,x_2,x_3)}{\partial(a_1,a_2,a_3)}\right]=
\begin{pmatrix}
1+\e\cos\varphi & \e\sin\varphi & 0\\[2pt]
t\,\partial_{a_1}u & t\,\partial_{a_2}u & -1\\[2pt]
\e\sin\varphi & 1-\e\cos\varphi & 0
\end{pmatrix},
\end{equation}
with determinant
\begin{equation}\label{eq:det}
\det\left[\frac{\partial(x_1,x_2,x_3)}{\partial(a_1,a_2,a_3)}\right]
=1-\e^2=\cD>0,
\end{equation}
which depends neither on $t$ nor on $u$; hence $\nabla\cdot\boldsymbol \bU=0$.\\
The inverse matrix, with entries $(\partial a_i/\partial x_j)$, is
\begin{equation}\label{eq:jacinv}
\begin{aligned}&\left[\frac{\partial(a_1,a_2,a_3)}{\partial(x_1,x_2,x_3)}\right]=\\
&=\frac{1}{\cD}
\begin{pmatrix}
1-\e\cos\varphi & 0 & -\e\sin\varphi\\[2pt]
-\e\sin\varphi & 0 & 1+\e\cos\varphi\\[2pt]
t\bigl[(1-\e\cos\varphi)\,\partial_{a_1}u-\e\sin\varphi\,\partial_{a_2}u\bigr] & -\cD &
t\bigl[(1+\e\cos\varphi)\,\partial_{a_2}u-\e\sin\varphi\,\partial_{a_1}u\bigr]
\end{pmatrix}.\end{aligned}
\end{equation}

\noindent
For constant $u$ the second row of \eqref{eq:jac} reduces to $(0,\,0,\,-1)$ and $\nabla a_3=(0,\,-1,\,0)$.\\

\noindent
The pressure and density distributions are obtained by substituting \eqref{eq:paths} into the Lagrangian Euler equations and using \eqref{eq:disp}, giving
\begin{equation}\label{eq:pa}
\frac{\partial P}{\partial a_1}=0,\qquad
\frac{\partial P}{\partial a_2}=-(1-e^{2ka_2})\,\rho g\cos\alpha,\qquad
\frac{\partial P}{\partial a_3}=\rho g\sin\alpha,
\end{equation}
which are solved by $P=g f(b)$, $\rho=-f'(b)$, with $f$ decreasing and
\begin{equation}\label{eq:b}
b=\Bigl(a_2-\frac{e^{2ka_2}}{2k}\Bigr)\cos\alpha-a_3\sin\alpha .
\end{equation}

\begin{remark}\label{rem:misprint}
Equation (4.4c) of \citet{G} prints $b=a_2(1-e^{2ka_2})\cos\alpha-a_3\sin\alpha$. Its $a_2$-derivative is $(1-e^{2ka_2}-2ka_2e^{2ka_2})\cos\alpha$, so it does not satisfy the second relation in \eqref{eq:pa}; the expression \eqref{eq:b} does, and agrees with the pressure of \citet{C23}. Accordingly, for $\alpha=0$ the free surface $a_2=a_2^{(0)}$ corresponds to $b_0=a_2^{(0)}-e^{2ka_2^{(0)}}/(2k)<-1/(2k)$. Only $\partial b/\partial a_2=(1-e^{2ka_2})\cos\alpha$ and $\partial b/\partial a_3=-\sin\alpha$ enter below.
\end{remark}

\subsection{Thermodynamics through the reference state}
In hydrostatic equilibrium the same construction gives $P=gf(B)$, $\rho=-f'(B)$, with $B=x_2\sin\alpha+x_3\cos\alpha$ the height along $-\mathbf g$ \cite[(4.7)]{G}; we call this the reference state, with profiles $P_s(B)$ and $\rho_s(B)$. In the wave, $P=P_s(b)$ and $\rho=\rho_s(b)$: each particle carries the pressure and density that the reference state has at the height $B=b$, and the isobars, which are also isopycnals and material surfaces, are the trochoidal surfaces $b=\mathrm{const}$; likewise $C=C(b)$ (figure~\ref{fig:reference}). 
\begin{figure}
\centering
\includegraphics[width=.8\textwidth]{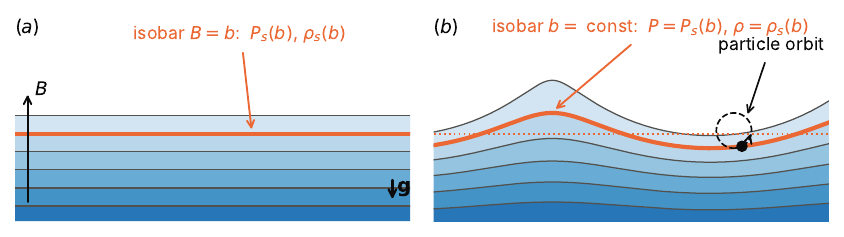}
\caption{Godin's construction of pressure and density, for $\alpha=0$. (\textit{a}) The reference state, the hydrostatic equilibrium with the same function $f$: the isobars are the horizontal planes $B=\mathrm{const}$, with pressure $P_s(B)$ and density $\rho_s(B)$. (\textit{b}) The wave: each isobar $b=\mathrm{const}$ is a trochoidal material surface, which is also an isopycnal, and carries the values $P_s(b)$ and $\rho_s(b)$ that the reference state has at the height $B=b$; the uppermost isobar is the free surface. Particles move on circles (dashed) without leaving their isobar, and the mean height of the isobar $b$ is $b$ (dotted).}
\label{fig:reference}
\end{figure}
We measure the stratification by the buoyancy frequency of the reference state at the height $b$, e.g. \citep{Gill82},
\begin{equation}\label{eq:N0}
N_0^2(b):=-g\left(\frac{\rho_s'(b)}{\rho_s(b)}+\frac{g}{C^2(b)}\right),
\end{equation}
which depends only on the reference state; the deformation of the isobars by the wave enters the analysis separately, through $\nabla b$ (Section~\ref{sec:reduction}).
For an ideal gas with constant $\gamma$ and a polytropic reference state $P_s\propto\rho_s^{\,n}$ \cite[(4.8)--(4.9)]{G}, the hydrostatic balance $P_s'=-g\rho_s$ and $C^2=\gamma P_s/\rho_s$ give
\begin{equation}\label{eq:N0poly}
N_0^2=\frac{g}{H}\Bigl(\frac1n-\frac1\gamma\Bigr),\qquad H:=\frac{P_s}{\rho_s g}=\frac{R_dT}{g},\end{equation}
where $R_d\approx287$\,J\,kg$^{-1}$\,K$^{-1}$ is the specific gas constant of air and $T$ the temperature of the reference state, so $N_0\equiv0$ in an adiabatic atmosphere ($n=\gamma$) and $N_0^2=\kappa g/H$, $\kappa=(\gamma-1)/\gamma$, in an isothermal one ($n=1$) \citep{H60,GH75}. A homogeneous incompressible liquid, and more generally any barotropic fluid $P=P(\rho)$, also has $N_0\equiv0$.

The analysis below uses three exact properties only: $\nabla\cdot\bU=0$; $P$, $\rho$ and $C$ are functions of $b$, hence constant along particle paths; and $\bU$ does not depend on $x_2$.


\section{Instability analysis}\label{instability}
\subsection{The short-wavelength instability method}\label{sec:wkb}

Throughout, $(\bU,P,\rho)$ denotes one of the basic flows of Section~\ref{sec:godin}. The motion is adiabatic, $DS/Dt=0$, so the equation of state \eqref{2.3} gives $DP/Dt=C^2\,D\rho/Dt$, with $C^2:=(\partial P/\partial\rho)_S$ the squared sound speed, and \eqref{2.2} gives $DP/Dt=-\rho C^2\,\nabla\cdot\bU$. Since $\nabla\cdot\bU=0$ and $C=C(P,S)$,
\begin{equation}\label{eq:basicDt}
\frac{DP}{Dt}=\frac{D\rho}{Dt}=\frac{DS}{Dt}=\frac{DC}{Dt}=0 .
\end{equation}

\subsubsection{Linearised equations}
We superimpose on the basic flow small perturbations of velocity, density and pressure,
\begin{equation*}
\bU\to\bU+\mathfrak u,\qquad \rho\to\rho+\mathfrak{r},\qquad P\to P+\mathfrak p,\qquad \mathfrak u=(\mathfrak u_1,\mathfrak u_2,\mathfrak u_3).
\end{equation*}
Substituting into \eqref{2.1}, \eqref{2.2} and into the adiabatic relation $DP/Dt=-\rho C^2\,\nabla\cdot\bU$, expanding $1/(\rho+\mathfrak{r})=1/\rho-\mathfrak{r}/\rho^2+O(\mathfrak{r}^2)$ and keeping only the terms linear in the perturbations, we obtain
\begin{equation}\label{eq:lin}
\left\{\begin{aligned}
&\frac{D\mathfrak u}{Dt}+(\nabla\bU)\,\mathfrak u+\frac{\nabla\mathfrak p}{\rho}-\frac{\mathfrak{r}}{\rho^2}\,\nabla P=0,\\
&\frac{D\mathfrak{r}}{Dt}+\mathfrak u\cdot\nabla\rho+\rho\,\nabla\cdot\mathfrak u=0,\\
&\frac{D\mathfrak p}{Dt}+\mathfrak u\cdot\nabla P+\rho C^2\,\nabla\cdot\mathfrak u=0,
\end{aligned}\right.
\end{equation}
where $D/Dt=\partial_t+\bU\cdot\nabla$ is the material derivative along the basic flow, $(\nabla\bU)_{ij}=\partial U_i/\partial x_j$, and the terms proportional to $\nabla\cdot\bU$, among which the one containing the perturbation of $\rho C^2$, have been dropped. The system is completed by the initial data $(\mathfrak u,\mathfrak{r},\mathfrak p)(\mathbf x,0)$.

Unlike \citet{P}, where the Eckart transform was used, we keep $\mathfrak{r}$ and $\mathfrak p$ as unknowns. By \eqref{eq:basicDt}, the last two equations in \eqref{eq:lin} combine into the exact relation
\begin{equation}\label{eq:identity}
\frac{D}{Dt}\Bigl(\mathfrak{r}-\frac{\mathfrak p}{C^2}\Bigr)=-\,\mathfrak u\cdot\Bigl(\nabla\rho-\frac{\nabla P}{C^2}\Bigr),
\end{equation}
in which $\nabla\cdot\mathfrak u$ no longer appears. By the equation of state, $\mathfrak{r}-\mathfrak p/C^2=(\partial\rho/\partial S)_P\,\delta S$, with $\delta S$ the entropy perturbation, so \eqref{eq:identity} is the linearised form of $DS/Dt=0$.

\subsubsection{Rapidly oscillating perturbations}
Following \citet{LH91} and \citet{LL92}, we consider localised perturbations of short wavelength in the following Wentzel-Kramers-Brillouin (WKB) form,
\begin{equation}\label{eq:ansatz}
(\mathfrak u,\mathfrak{r},\mathfrak p)(\mathbf x,t)=e^{\mathrm i\Phi(\mathbf x,t)/\varepsilon}\Bigl[\bigl(\fU,\rh,\varepsilon\mathfrak P\bigr)+\varepsilon\bigl(\fU_1,\rh_1,\varepsilon\mathfrak P_1\bigr)+O(\varepsilon^2)\Bigr],\qquad 0<\varepsilon\ll1,
\end{equation}
with a real phase $\Phi$, amplitudes depending on $(\mathbf x,t)$, and local wavevector $\bxi:=\nabla\Phi$. The pressure amplitude is taken one order smaller than the others: a non-zero pressure amplitude of order one is compatible only with the acoustic branch $(D\Phi/Dt)^2=C^2|\bxi|^2$, whereas we are interested in perturbations whose phase is advected by the flow (namely $D\Phi/Dt=0$).\\
At order $\varepsilon^{-1}$, the three equations \eqref{eq:lin} give, respectively, 
\begin{equation}
    \left(\frac{D\Phi}{Dt}\right)\,\fU=0, \qquad  \left(\frac{D\Phi}{Dt}\right)\,\rh+\rho\,\bxi\cdot\fU=0,\qquad \rho C^2\,\bxi\cdot\fU=0.
\end{equation}
Hence, for a non-trivial amplitude, the velocity amplitude is transverse to the wavevector and the phase is advected by the basic flow:
\begin{equation}\label{eq:eikonal}
\bxi\cdot\fU=0,\qquad \frac{D\Phi}{Dt}=0,\qquad\text{so that}\qquad \frac{D\bxi}{Dt}=-(\nabla\bU)^T\bxi .
\end{equation}
At order $\varepsilon^{0}$, using \eqref{eq:eikonal} we get
\begin{equation}\label{eq:order0}
\left\{\begin{aligned}
&\frac{D\fU}{Dt}+(\nabla\bU)\fU+\frac{\mathrm i\,\mathfrak P}{\rho}\,\bxi-\frac{\rh}{\rho^2}\,\nabla P=0,\\
&\frac{D\rh}{Dt}+\fU\cdot\nabla\rho+\rho\,\bigl(\nabla\cdot\fU+\mathrm i\,\bxi\cdot\fU_1\bigr)=0,\\
&\fU\cdot\nabla P+\rho C^2\bigl(\nabla\cdot\fU+\mathrm i\,\bxi\cdot\fU_1\bigr)=0 .
\end{aligned}\right.
\end{equation}
Eliminating $\nabla\cdot\fU+\mathrm i\,\bxi\cdot\fU_1$ between the last two equations gives $D\rh/Dt=-\fU\cdot(\nabla\rho-C^{-2}\nabla P)$, the leading-order form of \eqref{eq:identity}, while the last equation determines $\bxi\cdot\fU_1$. The pressure amplitude follows from the transversality constraint: differentiating $\bxi\cdot\fU=0$ along the flow and using \eqref{eq:eikonal} and the first equation of \eqref{eq:order0},
\begin{equation*}
\frac{\mathrm i\,\mathfrak P}{\rho}=\frac{1}{|\bxi|^2}\Bigl(\frac{\rh}{\rho^2}\,\bxi\cdot\nabla P-2\,\bxi\cdot(\nabla\bU)\fU\Bigr).
\end{equation*}
Substituting back, along each trajectory $\mathbf X(t)$ of the basic flow the wavevector and the leading-order amplitudes obey the system of ordinary differential equations
\begin{equation}\label{eq:wkb}
\left\{\begin{aligned}
\dot{\mathbf X}&=\bU, \\ \dot\bxi&=-(\nabla\bU)^{T}\bxi,\\
\dot\fU&=-(\nabla\bU)\fU+\frac{2\,\bxi\cdot(\nabla\bU)\fU}{|\bxi|^2}\,\bxi+\frac{\rh}{\rho^2}\Bigl(\mathbb I-\frac{\bxi\bxi^{T}}{|\bxi|^2}\Bigr)\nabla P,\\
\dot{\rh}&=-\fU\cdot\Bigl(\nabla\rho-\frac{\nabla P}{C^2}\Bigr),
\end{aligned}\right.
\end{equation}
where the dot denotes $D/Dt$ and $\mathbb I$ is the identity matrix; the constraint $\fU\cdot\bxi=0$ is preserved. System \eqref{eq:wkb} is completed by the initial data
\begin{equation*}
\mathbf X(0)=\mathbf X_0,\quad \bxi(0)=\bxi_0=\nabla\Phi_0(\mathbf X_0),\quad \fU(0)=\fU_0,\quad \rh(0)=\rh_0,\quad \bxi_0\cdot\fU_0=0 .
\end{equation*}

\begin{figure}
    \centering
    \includegraphics[width=.8\linewidth]{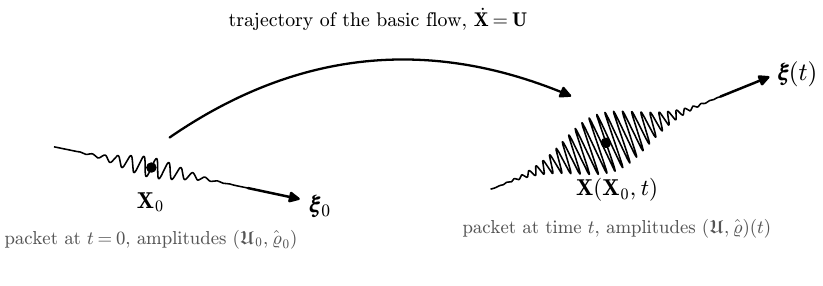}
    \caption{Sketch of the short-wavelength method. At $t=0$ a disturbance of the form \eqref{eq:ansatz} is localised around $\mathbf X_0$, with wavevector $\bxi(0)=\bxi_0$ and amplitudes $(\fU_0,\rh_0)$. It is advected by the basic flow and, at time $t$, is localised around $\mathbf X(\mathbf X_0,t)$, with wavevector $\bxi(t)$ and amplitudes $(\fU,\rh)(t)$ given by \eqref{eq:wkb}. The basic flow is unstable if, for some initial data, the amplitudes grow without bound, in the sense of \eqref{eq:instab}.}
    \label{fig:instability}
\end{figure}
\noindent
The initial velocity amplitude is taken of the form $\fU_0=\psi^{-1}\,\nabla F\times\nabla\Phi_0$, with $F$ smooth and
\begin{equation*}
\psi(P):=\exp\int^{P}\frac{\mathrm dp}{\rho C^2(p)} ,
\end{equation*}
which is well defined for Godin's flows, where $\rho$ and $C$ are functions of $b$ only and $P$ is a strictly monotone function of $b$; for an ideal gas with constant $\gamma$, $\rho C^2=\gamma P$ and $\psi=P^{1/\gamma}$, as in \citet{P}. Then $\bxi_0\cdot\fU_0=0$ and $\nabla\cdot(\psi\fU_0)=0$. Since the last equation of \eqref{eq:order0} can be written $\nabla\cdot(\psi\fU)=-\mathrm i\,\psi\,\bxi\cdot\fU_1$, the second condition makes these initial data, for which $\fU_1(0)=0$, consistent with \eqref{eq:ansatz} at order $\varepsilon$. It can be proved (see \citealp{LL92}) that the solution of \eqref{eq:lin} is then of the form \eqref{eq:ansatz}, up to terms of order $\varepsilon$.\\
\noindent
The system \eqref{eq:wkb} describes a sharply peaked disturbance, located at $t=0$ around $\mathbf X_0$, which is advected by the basic flow while its wavevector and amplitudes evolve (Figure \ref{fig:instability}). Following \citet{LH91} and \citet{LL92}, the basic flow is linearly unstable if
\begin{equation}\label{eq:instab}
\sup_{\substack{\mathbf X_0,\,\bxi_0,\,\fU_0,\,\rh_0\\ \bxi_0\cdot\fU_0=0}}\ \limsup_{t\to\infty}\ \bigl|\bigl(\fU,\rh\bigr)(\mathbf X_0,\bxi_0,\fU_0,\rh_0;t)\bigr|=\infty .
\end{equation}

\begin{remark}\label{rem:relP}\label{rem:eos}
In \citet{P} the Eckart variables $\mathfrak m$, $\mathfrak n$, defined by $\mathfrak{r}=\rho(\mathfrak m+\mathfrak n)/C$ and $\mathfrak p=\rho C\,\mathfrak n$, were used. In \eqref{eq:ansatz} the pressure is of order $\varepsilon$, so $\mathfrak n$ vanishes at leading order and the entropy amplitude of \citet{P} is $\mathfrak M=C\rh/\rho$. With this identification, and for an ideal gas with constant $\gamma$, system \eqref{eq:wkb} coincides with system (17) of \citet{P}. The derivation above uses no ideal-gas relation; it therefore applies to Godin's general fluids, including an ideal gas with $\gamma=\gamma(b)$ \citep[(4.5)]{G} and sea water.
\end{remark}

\subsection{Reduction of the system and Floquet analysis}\label{sec:reduction}
Having presented the general theory of short-wavelength instabilities for compressible flows, without the use of Eckart variables (as is instead the case  in \citet{LH91,LL92,P}), we now turn to the analysis of the wave motion of \cite{G}.\\
Throughout this section, we  consider the case of constants $u$ and $v$, and we postpone to \S\ref{sec:currents} the discussion of the case in which $u$ and $v$ are not constant. Moreover, we denote by $\be_2=(0,1,0)$.\\
We start by computing the gradients of the basic state that enter \eqref{eq:wkb}. Since $x_1$ and $x_3$ in \eqref{eq:paths} depend only on $(a_1,a_2,t)$ and $u$ is constant, the velocity \eqref{eq:velocity} does not depend on $x_2$ and its transverse component $U_2=u$ is uniform: $\partial\bU/\partial x_2=0$ and $\nabla U_2=0$.\\ The in-plane Jacobian matrix and its inverse are
\begin{equation}\label{eq:jacperp}
\begin{aligned}
\left[\frac{\partial(x_1,x_3)}{\partial(a_1,a_2)}\right]&=
\begin{pmatrix}
1+\e\cos\varphi & \e\sin\varphi\\[2pt]
\e\sin\varphi & 1-\e\cos\varphi
\end{pmatrix},\\
\left[\frac{\partial(a_1,a_2)}{\partial(x_1,x_3)}\right]&=
\begin{pmatrix}
\dfrac{1-\e\cos\varphi}{\cD} & -\dfrac{\e\sin\varphi}{\cD}\\[2pt]
-\dfrac{\e\sin\varphi}{\cD} & \dfrac{1+\e\cos\varphi}{\cD}
\end{pmatrix};\end{aligned}
\end{equation}
the rows of the inverse are the gradients of $a_1$ and $a_2$ in the $(x_1,x_3)$-plane. \\
The derivatives of $(U_1,U_3)$ with respect to $(a_1,a_2)$ are $kc_*\e$ times the rows $(-\sin\varphi,\ \cos\varphi)$ and $(\cos\varphi,\ \sin\varphi)$. Multiplying them by the inverse Jacobian and introducing
\begin{equation}\label{eq:Sn}
S(\varphi)=\begin{pmatrix}-\sin\varphi&\cos\varphi+\e\\ \cos\varphi-\e&\sin\varphi\end{pmatrix},\qquad
\bn(\varphi)=\begin{pmatrix}-\e\sin\varphi\\ 1+\e\cos\varphi\end{pmatrix},
\end{equation}
we obtain
\begin{equation}\label{eq:grads}
\nabla_{\!\perp}(U_1,U_3)=\frac{kc_*\e}{\cD}\,S(\varphi),\qquad
\nabla_{\!\perp}a_2=\frac{\bn(\varphi)}{\cD},
\end{equation}
where $\nabla_{\!\perp}$ denotes the gradient in $(x_1,x_3)$. Since $a_3=ut-x_2$, we have $\nabla a_3=-\be_2$, and \eqref{eq:b} gives
\begin{equation}\label{eq:gradb}
\nabla b=\frac{\partial b}{\partial a_2}\,\nabla a_2+\frac{\partial b}{\partial a_3}\,\nabla a_3=\cos\alpha\,(n_1,0,n_2)+\sin\alpha\,\be_2 .
\end{equation}
Finally, $P=gf(b)$ and $\rho=-f'(b)=\rho_s(b)$, so that, by \eqref{eq:N0},
\begin{equation}\label{eq:gradPrho}
\nabla P=-g\rho\,\nabla b,\qquad \nabla\rho-\frac{\nabla P}{C^2}=-\frac{\rho N_0^2}{g}\,\nabla b .
\end{equation}
\noindent
We consider perturbations whose wavevector is normal to the plane of motion, $\bxi(0)=\be_2$. Since $(\nabla\bU)^T\be_2=\nabla U_2=0$, the eikonal equation in \eqref{eq:wkb} gives $\dot\bxi=0$, and the constraint $\fU\cdot\bxi=0$ then forces the transverse amplitude to vanish:
\begin{equation}\label{eq:xi_e2}
\bxi(t)=\be_2,\qquad \mathfrak U_2\equiv0 .
\end{equation}
In the amplitude equation of \eqref{eq:wkb}, the term proportional to $\bxi$ vanishes, because $\bxi\cdot(\nabla\bU)\fU=\nabla U_2\cdot\fU=0$, and the projection $\mathbb I-\be_2\be_2^T$ removes the $\be_2$-component of $\nabla P$; consistently, $\dot{\mathfrak U}_2=0$. By \eqref{eq:gradb} and \eqref{eq:gradPrho}, the projected pressure gradient is $(\mathbb I-\be_2\be_2^T)\nabla P=-g\rho\cos\alpha\,(n_1,0,n_2)$, and the density equation gives $$\dot{\rh}=(\rho N_0^2/g)\,\fU\cdot\nabla b=(\rho N_0^2/g)\cos\alpha\,\bn\cdot\bV,\qquad \text{where}\qquad \bV:=(\mathfrak U_1,\mathfrak U_3).$$ Since $\rho$ is constant along trajectories, $m:=\rh/\rho$ satisfies $\dot m=\dot{\rh}/\rho$, and, using \eqref{eq:grads}, the in-plane amplitudes obey
\begin{equation}\label{eq:sys_t}
\dot\bV=-\frac{kc_*\e}{\cD}\,S(\varphi)\,\bV-g\cos\alpha\;m\,\bn(\varphi),\qquad
\dot m=\frac{N_0^2\cos\alpha}{g}\;\bn(\varphi)\cdot\bV .
\end{equation}

\begin{remark}    
The along-wall component $g\sin\alpha$ of gravity, which has no counterpart in \citet{P}, is absorbed by the projection: for edge waves it is balanced by the perturbation pressure, and only the effective gravity $g\cos\alpha$ acts in the plane of motion.
\end{remark}
\begin{proposition}[Equivalence with \citet{P}]\label{prop:equiv}
Set
\begin{equation}\label{eq:change}
\begin{gathered}
\theta:=\pi-\varphi,\qquad \tV:=R\,\bV,\quad R=\begin{pmatrix}
    -1 & 0\\
    0 & 1
\end{pmatrix},\qquad \tM:=c_*\,m,\\
\nu_0^2:=\frac{\cos^2\!\alpha\,N_0^2(b)}{\Omega^2},\qquad \nu^2:=(1-\e^2)\,\nu_0^2=\cD\,\nu_0^2  .
\end{gathered}
\end{equation}
Then, with ${}'=d/d\theta$ and
\begin{equation}\label{eq:Gq}
G(\theta)=\frac{\e}{\cD}\begin{pmatrix}-\sin\theta&\cos\theta-\e\\ \cos\theta+\e&\sin\theta\end{pmatrix},\qquad
\bq(\theta)=\begin{pmatrix}\e\sin\theta\\ 1-\e\cos\theta\end{pmatrix},
\end{equation}
equations \eqref{eq:sys_t} read as
\begin{equation}\label{eq:sysP}
\tV'=G(\theta)\,\tV+\tM\,\bq(\theta),\qquad \tM'=-\frac{\nu^2}{\cD}\,\bq(\theta)\cdot\tV,
\end{equation}
which is system (25) of \citet{P}. In particular, the monodromy matrix, the Floquet multipliers and the discriminant $\Delta$ are those of \citet{P} evaluated at $(\e,\nu^2)$ with $\nu^2=(1-\e^2)\nu_0^2$; they are described below.
\end{proposition}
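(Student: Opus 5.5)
The proof is a direct change of variables along a single trajectory of the basic flow, so I would carry it out term by term.

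\textbf{Step 1: the independent variable.} First I would record which quantities are frozen along the trajectory. The label $a_2$ is constant on it, so $\e$ and $\cD$ are constant. Because $b$ is constant, by \eqref{eq:basicDt} and the remarks after \eqref{eq:N0}, $N_0^2(b)$ and hence $\nu_0^2$ and $\nu^2$ are constant as well. The phase obeys $\dot\varphi=kc_*$, since $\varphi=k(a_1+c_*t)$ with $a_1$ fixed. With $\theta=\pi-\varphi$ this gives
\[
\frac{d}{dt}=-kc_*\,\frac{d}{d\theta},
\]
a nonzero constant multiple of $d/d\theta$ (its sign is that of $-c_*$). Hence $(\,\cdot\,)'=-(kc_*)^{-1}\,\dot{(\,\cdot\,)}$. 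The trigonometric substitutions needed are $\sin\varphi=\sin\theta$ and $\cos\varphi=-\cos\theta$.

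\textbf{Step 2: the velocity equation.} Next I would apply $R$ to the first equation of \eqref{eq:sys_t} and divide by $-kc_*$. This gives
\[
\tV'=\frac{\e}{\cD}\,R\,S(\varphi)\,R\,\tV+\frac{g\cos\alpha}{kc_*}\,m\,R\bn(\varphi),
\]
using $R^2=\mathbb I$. Rewriting in $\theta$,
\[
S=\begin{pmatrix}-\sin\theta&-\cos\theta+\e\\-\cos\theta-\e&\sin\theta\end{pmatrix}.
\]
Conjugation by $R=\mathrm{diag}(-1,1)$ keeps the diagonal and flips the sign of the off-diagonal entries, so $\tfrac{\e}{\cD}RSR=G(\theta)$. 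Similarly,
\[
R\bn=(\e\sin\varphi,\;1+\e\cos\varphi)=(\e\sin\theta,\;1-\e\cos\theta)=\bq(\theta).
\]
For the coefficient of $m$, the dispersion relation \eqref{eq:disp}, $kc_*^2=g\cos\alpha$, gives $g\cos\alpha/(kc_*)=c_*$, so the forcing term is exactly $c_*m\,\bq=\tM\,\bq$.

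\textbf{Step 3: the density equation.} Since $\tM$ is a constant multiple of $m$,
\[
\tM'=c_*m'=-\frac{1}{k}\,\dot m=-\frac{N_0^2\cos\alpha}{gk}\,\bn\cdot\bV .
\]
Because $R$ is orthogonal, $\bn\cdot\bV=R\bn\cdot R\bV=\bq\cdot\tV$. Also $\Omega^2=gk\cos\alpha$, so
\[
\frac{N_0^2\cos\alpha}{gk}=\frac{\cos^2\alpha\,N_0^2}{\Omega^2}=\nu_0^2=\frac{\nu^2}{\cD}.
\]
This yields the second equation of \eqref{eq:sysP}. Comparing \eqref{eq:sysP} with system (25) of \citet{P}, coefficient by coefficient, identifies the two systems. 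The monodromy matrix over one period $2\pi$ in $\theta$, the Floquet multipliers and $\Delta$ are therefore those of \citet{P} at $(\e,\nu^2)$.

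\textbf{Main obstacle.} The only delicate point is bookkeeping of signs and orientation. The factor $-kc_*$ may be negative, for instance for left-moving waves or strong currents. I would check that it cancels exactly in both equations; it does, because it divides both time derivatives and is absorbed through $kc_*^2=g\cos\alpha$ and $c_*/(kc_*)=1/k$, so no assumption on the sign of $c_*$ is needed. Reversing the direction of $\theta$ relative to $t$ replaces the monodromy over one period by its inverse. This leaves the growth criterion unchanged, since the system is real and the trace-zero structure of \citet{P} makes the multiplier set invariant under inversion.
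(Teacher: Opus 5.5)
Your proof is correct and is essentially the paper's: the paper first passes to $\varphi$ (using $d/d\varphi=(kc_*)^{-1}d/dt$, $g\cos\alpha=kc_*^2$ and $gk=\Omega^2/\cos\alpha$) and then to $\theta=\pi-\varphi$ with $R\,S(\pi-\theta)\,R=(\cD/\e)\,G(\theta)$ and $R\,\bn(\pi-\theta)=\bq(\theta)$, whereas you go to $\theta$ in one step, which is the same computation. One small imprecision is in your closing remark: $\operatorname{tr}\mathbb A=0$ alone gives only $\mu_1\mu_2\mu_3=1$, and invariance of the multiplier set under $\mu\mapsto1/\mu$ (hence equal growth rates for $\mathbb M$ and $\mathbb M^{-1}$) also needs the multiplier $1$ carried by $\mathcal I$, or the reversibility $\Sigma\,\mathbb A(-\theta)\,\Sigma=-\mathbb A(\theta)$, as in Proposition~\ref{prop:mult}.
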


\begin{proof}
Along a path $\dot\varphi=kc_*$, so $d/d\varphi=(kc_*)^{-1}d/dt$; this holds for either sign of $c_*$, and the factor $kc_*$ cancels in the velocity block. Using $g\cos\alpha=kc_*^2$ and $gk=\Omega^2/\cos\alpha$, \eqref{eq:sys_t} becomes
\begin{equation}\label{eq:sys_phi}
\frac{d\bV}{d\varphi}=-\frac{\e}{\cD}\,S\,\bV-\tM\,\bn,\qquad
\frac{d\tM}{d\varphi}=\frac{N_0^2\cos\alpha}{gk}\,\bn\cdot\bV=\nu_0^2\,\bn\cdot\bV .
\end{equation}
For $\varphi=\pi-\theta$ one has $\sin\varphi=\sin\theta$ and $\cos\varphi=-\cos\theta$, hence $R\,S(\pi-\theta)\,R=(\cD/\e)\,G(\theta)$ and $R\,\bn(\pi-\theta)=\bq(\theta)=(\e\sin\theta,\,1-\e\cos\theta)^T$. \\
Multiplying the first equation by $R$ and using $d/d\theta=-d/d\varphi$ and $(R\bn)\cdot(R\bV)=\bn\cdot\bV$ gives \eqref{eq:sysP}, since $\nu_0^2=\nu^2/\cD$.
\end{proof}



\noindent
By Proposition~\ref{prop:equiv}, the stability of each level of Godin's waves is decided by the linear system \eqref{eq:sysP}, whose coefficients are $2\pi$-periodic in $\theta$ and depend on the two parameters $(\e,\nu^2)\in(0,1)\times\mathbb R$. This system was analysed by \citet[\S3.1]{P}; we collect here, without proofs, the results used below, restated in the present notation. \\
Denoting by $\mathbf E:=(\widetilde V_1,\widetilde V_2,\tM)^T$, the system \eqref{eq:sysP} reads as $\mathbf E'=\mathbb A(\theta)\,\mathbf E$, where
\begin{equation}\label{eq:A}
\mathbb A(\theta)=
\begin{pmatrix}
-\dfrac{\e\sin\theta}{\cD} & \dfrac{\e(\cos\theta-\e)}{\cD} & \e\sin\theta\\[8pt]
\dfrac{\e(\cos\theta+\e)}{\cD} & \dfrac{\e\sin\theta}{\cD} & 1-\e\cos\theta\\[8pt]
-\nu^2\,\dfrac{\e\sin\theta}{\cD} & -\nu^2\,\dfrac{1-\e\cos\theta}{\cD} & 0
\end{pmatrix}.
\end{equation}
Besides $\bq$ in \eqref{eq:Gq}, it is convenient to introduce the $2\pi$-periodic vector
\begin{equation}\label{eq:pq}
\bp(\theta)=\begin{pmatrix}1-\e\cos\theta\\ -\e\sin\theta\end{pmatrix},\qquad\text{so that}\qquad
\bq=J\bp,\qquad J=\begin{pmatrix}0&-1\\ 1&0\end{pmatrix};
\end{equation}
up to the reflection $R$ of Proposition~\ref{prop:equiv}, $\bp$ and $\bq$ are tangent and normal to the isopycnal through the parcel. They satisfy
\begin{equation}\label{eq:pqprop}
\begin{gathered}
\bp\cdot\bq=0,\qquad |\bp|^2=|\bq|^2=h(\theta):=1-2\e\cos\theta+\e^2,\\
\bp'=-G\bp,\qquad \bq'=G^T\bq,\qquad G-G^T=\frac{2\e^2}{\cD}\,J .
\end{gathered}
\end{equation}

\noindent
Let $\mathbb{F}(\theta)$ denote the fundamental matrix satisfying
\begin{equation}
\mathbb{F}'(\theta)=\mathbb{A}(\theta)\mathbb{F}(\theta),
\qquad
\mathbb{F}(0)=\mathbb{I}.
\end{equation}
Since $\mathbb{A}$ is $2\pi$-periodic, the fundamental matrix satisfies
\begin{equation}
\mathbb{F}(\theta+2\pi)=\mathbb{F}(\theta)\mathbb{M},
\qquad
\mathbb{M}:=\mathbb{F}(2\pi),
\end{equation}
where $\mathbb{M}$ is the monodromy matrix. Its eigenvalues $\mu$ are the Floquet multipliers and determine the exponential stability of the periodic system. In particular, if $\mathbf{E}(0)$ is an eigenvector of $\mathbb{M}$ associated with the multiplier $\mu$, then
\begin{equation}
\mathbf{E}(2\pi n)=\mu^n\mathbf{E}(0),
\qquad n\in\mathbb{N},
\end{equation}
  the corresponding solution grows or decays at the rate $\ln|\mu|/(2\pi)$ per unit $\theta$ and, since $|\dot\theta|=\Omega$, at the rate
  \begin{equation}
  \sigma=\frac{\Omega}{2\pi}\ln|\mu|
  \end{equation}
  in physical time.
Hence, $|\mu|<1$ and $|\mu|>1$ correspond to exponential decay and growth, respectively, while $|\mu|=1$ implies zero exponential growth rate.

\noindent The direction in which $\theta$ is traversed does not affect stability. Along a trajectory, $\theta=\pi-k(a_1+c_*t)$, so that $\dot\theta=-kc_*$ and $|\dot\theta|=\Omega$. For $c_*>0$, a time step $t\mapsto t+T$, with $T:=2\pi/\Omega$, corresponds to $\theta\mapsto\theta-2\pi$, and $\mathbb F(\theta-2\pi)=\mathbb F(\theta)\,\mathbb M^{-1}$ gives
\begin{equation}\label{eq:oneperiod}
\mathbf E(t+T)=\mathbb M_t\,\mathbf E(t),\qquad \mathbb M_t:=\mathbb F\bigl(\theta(t)\bigr)\,\mathbb M^{-1}\,\mathbb F\bigl(\theta(t)\bigr)^{-1},
\end{equation}
so the one-period map in physical time is similar to $\mathbb M^{-1}$. For $c_*<0$, $\theta$ increases with time and the same argument gives a map similar to $\mathbb M$. Since the multipliers have the form $\{1,\mu,\mu^{-1}\}$ (Proposition~\ref{prop:mult} below), $\mathbb M$ and $\mathbb M^{-1}$ have the same multipliers, and the sign of $c_*$ is irrelevant for stability.

\begin{proposition}[Structure of the multipliers; {\citealp[Lemmas~4--5, Prop.~6]{P}}]\label{prop:mult}
\leavevmode
\begin{enumerate}[label=(\roman*),leftmargin=1em]
\item $\operatorname{tr}\mathbb A=0$; hence $\det\mathbb F(\theta)\equiv1$ and $\mu_1\mu_2\mu_3=1$.
\item With $\Sigma:=\mathrm{diag}(1,-1,1)$, one has $\Sigma\,\mathbb A(-\theta)\,\Sigma=-\mathbb A(\theta)$; consequently $\mathbb M=\Sigma\,\mathbb M^{-1}\Sigma$, and the spectrum of $\mathbb M$ is invariant under $\mu\mapsto1/\mu$.
\item One multiplier equals $1$ for all values of the parameters, and the other two are the roots $\mu^{\pm1}$ of
\begin{equation}\label{eq:charpoly}
\mu^2-(\operatorname{tr}\mathbb M-1)\,\mu+1=0 .
\end{equation}

\end{enumerate}
\end{proposition}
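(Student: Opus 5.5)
The three items will be proved in order, each directly from the explicit matrix \eqref{eq:A} and the identities \eqref{eq:pqprop}.

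\emph{(i) Trace and determinant.} The diagonal of $\mathbb A$ is $-\e\sin\theta/\cD$, $\e\sin\theta/\cD$ and $0$, so $\operatorname{tr}\mathbb A\equiv0$. Liouville's formula then gives $\det\mathbb F(\theta)=\exp\int_0^\theta\operatorname{tr}\mathbb A=1$. In particular $\det\mathbb M=1$, hence $\mu_1\mu_2\mu_3=1$.

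\emph{(ii) Reversibility.} Conjugation by $\Sigma=\mathrm{diag}(1,-1,1)$ multiplies each entry $a_{ij}$ by $s_is_j$. Under $\theta\mapsto-\theta$, $\sin$ changes sign and $\cos$ does not. I will check the entries one by one. The diagonal entries are odd in $\theta$ and are not sign-flipped by $\Sigma$, so they acquire the required minus. The entries $(1,2),(2,1),(2,3),(3,2)$ are even in $\theta$ and are flipped by $\Sigma$. The entries $(1,3),(3,1)$ are odd and not flipped. This gives $\Sigma\mathbb A(-\theta)\Sigma=-\mathbb A(\theta)$.

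Set $\mathbb G(\theta):=\Sigma\mathbb F(-\theta)\Sigma$. Then $\mathbb G'=-\Sigma\mathbb A(-\theta)\mathbb F(-\theta)\Sigma=\mathbb A(\theta)\mathbb G$ and $\mathbb G(0)=\mathbb I$, so $\mathbb G=\mathbb F$. Evaluating at $2\pi$ and using $\mathbb F(-2\pi)=\mathbb M^{-1}$ (from periodicity) gives $\mathbb M=\Sigma\mathbb M^{-1}\Sigma$. Hence $\mathbb M$ is similar to $\mathbb M^{-1}$, and its spectrum is closed under $\mu\mapsto1/\mu$.

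\emph{(iii) The multiplier $1$.} This is the main step: I need a $2\pi$-periodic solution, or a periodic first integral. I would look for a conserved quantity built from $\bp$. Take $\mathbf w(\theta)=(\bp(\theta),0)$ as an adjoint solution, i.e.\ a solution of $\mathbf w'=-\mathbb A^T\mathbf w$. Then $\mathbf w\cdot\mathbf E$ is constant along solutions. The velocity block of this equation reads $\bp'=-G^T\bp+(\nu^2/\cD)\,\bq\cdot0$, which does not obviously match $\bp'=-G\bp$, so $\bp$ alone may not work. The third component also has to vanish, and it requires $\bq\cdot\bp=0$, which holds.

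If this first guess fails, I would instead try the direct solution $\mathbf E=(\alpha\bp+\beta\bq,\ \gamma)$ with constants $\alpha,\beta,\gamma$ and use $\bp'=-G\bp$, $\bq'=G^T\bq$ and $G-G^T=(2\e^2/\cD)J$. The $\tM$-equation gives $\gamma'=-\nu^2\beta h/\cD$, which forces $\beta=0$. For $\mathbf E=(\bp,\gamma)$, the velocity equation demands $-G\bp=G\bp+\gamma\bq$. So the expected periodic solution is $\tV=\bp$ with the constant $\tM=\gamma$ fixed by matching this. This is the step I would verify by explicit computation: $G\bp$ should be a multiple of $\bq$, and I expect $G\bp=-(\e^2/\cD)\,J\bp$ up to sign, which makes $\gamma$ a constant. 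Since $\bp$ is $2\pi$-periodic, this solution is periodic and $1$ is a multiplier.

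By (i) the remaining two multipliers have product $1$. By (ii) they are $\mu^{\pm1}$. Their sum is $\operatorname{tr}\mathbb M-1$, which gives \eqref{eq:charpoly}.
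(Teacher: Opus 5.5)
Your items (i) and (ii) are correct. So is the last step from a multiplier $1$ to \eqref{eq:charpoly}, which in fact needs only $\det\mathbb M=1$, not (ii).

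The gap is in (iii): the periodic solution you expect does not exist. A direct computation gives $G\bp=\e\,(-\sin\theta,\ \cos\theta)^T=-\bp'$. Its determinant with $\bq=(\e\sin\theta,\,1-\e\cos\theta)^T$ is $-\e\sin\theta\not\equiv0$, so $G\bp$ is not a multiple of $\bq$ and $-2G\bp=\gamma\bq$ fails. More generally, your ansatz $\tV=\alpha\bp+\beta\bq$, $\tM=\gamma$ leads to $-2\alpha G\bp+(2\beta\e^2/\cD)\bp=\gamma\bq$. Matching the constant and first-harmonic terms forces $\alpha=\beta=\gamma=0$ for every $\nu^2$, including $\nu^2=0$. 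Your guess $G\bp=\pm(\e^2/\cD)J\bp$ drops the strain part of $G$ in \eqref{eq:Gsplit}, which does not annihilate $\bp$. The correct identity concerns only the antisymmetric part: $(G-G^T)\bp=(2\e^2/\cD)J\bp=(2\e^2/\cD)\bq$.

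That identity is exactly what rescues your first attempt, which failed only because you set the third component of the adjoint vector to zero. Take $\mathbf w=(\nu^2\bp,\,c)$ with $c$ constant.
\begin{itemize}
\item The third component of $\mathbf w'=-\mathbb A^T\mathbf w$ reads $c'=-\nu^2\,\bq\cdot\bp=0$.
\item The velocity block requires $\nu^2\bp'=-\nu^2G^T\bp+(\nu^2/\cD)\,c\,\bq$. This holds with $c=-2\e^2$, because $\bp'=-G\bp=-G^T\bp-(2\e^2/\cD)\bq$.
\end{itemize}
Thus $\mathcal I=\nu^2\,\bp\cdot\tV-2\e^2\tM$ of \eqref{eq:I} is conserved. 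Equivalently, $(\bp\cdot\tV)'=-(2\e^2/\cD)\,\bq\cdot\tV$ and $\tM'=-(\nu^2/\cD)\,\bq\cdot\tV$. This is the route the paper takes, via \citet[Remark~7]{P}.

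Since $\mathbf w$ is $2\pi$-periodic, $\mathbf w(0)\cdot\mathbf E(0)=\mathbf w(0)\cdot\mathbb M\mathbf E(0)$ for every $\mathbf E(0)$, i.e.\ $\mathbb M^T\mathbf w(0)=\mathbf w(0)$. Here $\mathbf w(0)=(\nu^2(1-\e),\,0,\,-2\e^2)^T\neq0$ because $\e>0$. Hence $1$ is a multiplier for all $(\e,\nu^2)$. A periodic solution of the forward system then exists, but it is not of the simple form you guessed.
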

\noindent
 We recall (see {\citealp[Remark~7]{P}} for the details), that the multiplier $1$ expresses the conservation of the scalar
\begin{equation}\label{eq:I}
\mathcal I:=\nu^2\,\bp\cdot\tV-2\e^2\,\tM, 
\end{equation}
along every solution of \eqref{eq:sysP}. Up to a factor that is constant along each trajectory, $\mathcal{I}$ is the leading-order WKB amplitude of the Ertel potential vorticity of the perturbation.\\
With the structure presented above, we can state the following stability criterion.
\begin{theorem}[Stability criterion; {\citealp[Cor.~8]{P}}]\label{thm:criterion}
Let
\begin{equation}\label{eq:Delta}
\Delta(\e,\nu^2):=\operatorname{tr}\mathbb M-1=\mu+\mu^{-1}\in\mathbb R .
\end{equation}
The amplitudes grow exponentially in $t$ if and only if $|\Delta|>2$, in which case the maximal growth rate is
\begin{equation}\label{eq:sigmaP}
\sigma=\frac{\Omega}{2\pi}\,\operatorname{arcosh}\frac{|\Delta|}{2}.
\end{equation}
If $|\Delta|<2$ all solutions are bounded; if $|\Delta|=2$ there is no exponential growth, but solutions may grow algebraically. When $\Delta<-2$ the growing multiplier is negative, and the growing solution reverses its sign at every orbital period.
\end{theorem}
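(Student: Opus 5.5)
The criterion should follow from Floquet theory once Proposition~\ref{prop:mult} is in hand. The growth of solutions of $\mathbf E'=\mathbb A(\theta)\mathbf E$ is governed by the Jordan structure of $\mathbb M$. Solutions satisfy $\mathbf E(\theta+2\pi n)=\mathbb F(\theta)\mathbb M^n\mathbf E(0)$, and $\mathbb F$ is bounded on $[0,2\pi]$. Hence the amplitudes grow exponentially exactly when $\mathbb M$ has a multiplier of modulus greater than one. Boundedness holds when all multipliers lie on the unit circle and $\mathbb M$ is diagonalisable. By Proposition~\ref{prop:mult}(iii) the spectrum is $\{1,\mu,\mu^{-1}\}$, with $\mu^{\pm1}$ the roots of \eqref{eq:charpoly}. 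Since $\mathbb A$ is real, $\operatorname{tr}\mathbb M$ is real, so $\Delta=\operatorname{tr}\mathbb M-1=\mu+\mu^{-1}$ is real.

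I would then split into cases on $\Delta$.
\begin{itemize}
\item If $|\Delta|>2$, the roots are real and distinct, of the same sign as $\Delta$, with $|\mu|>1>|\mu^{-1}|$. Writing $|\mu|=e^{s}$, we get $|\Delta|=2\cosh s$, so $s=\operatorname{arcosh}(|\Delta|/2)$. The rate per unit $\theta$ is $s/(2\pi)$; with $|\dot\theta|=\Omega$ this gives \eqref{eq:sigmaP}. By \eqref{eq:oneperiod} the sign of $c_*$ only swaps $\mathbb M$ for $\mathbb M^{-1}$, which has the same multipliers.
\item If $\Delta<-2$, then $\mu<0$. The eigen-solution satisfies $\mathbf E(\theta+2\pi)=\mu\mathbf E(\theta)$, so it changes sign every period.
\item If $|\Delta|<2$, then $\mu=e^{\pm\mathrm i\phi}$ with $\cos\phi=\Delta/2$ and $\phi\in(0,\pi)$. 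Then $1,e^{\mathrm i\phi},e^{-\mathrm i\phi}$ are three distinct multipliers, so $\mathbb M$ is diagonalisable with unimodular eigenvalues and $\mathbb M^n$ is bounded. Hence all solutions are bounded.
\item If $|\Delta|=2$, then $\mu=\mu^{-1}=\pm1$ and all multipliers have modulus one, so there is no exponential growth. A repeated eigenvalue may carry a nontrivial Jordan block, however, so $\mathbb M^n$ may grow linearly in $n$. This only needs to be stated as a possibility, not proved.
\end{itemize}

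The main point needing care is the \emph{only if} direction and the $|\Delta|<2$ boundedness. In the \emph{only if} direction I must rule out exponential growth whenever $|\Delta|\le2$. Since every multiplier then has modulus one and $\|\mathbb M^n\|$ grows at most polynomially, this follows.

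For the $|\Delta|<2$ case, distinctness of the three multipliers is needed. The potential coincidence with the trivial multiplier $1$ occurs only when $\mu=1$, i.e.\ $\Delta=2$, which is excluded. I would check this explicitly. I would also note that boundedness in $\theta$ transfers to boundedness in $t$, since $\theta$ is an affine function of $t$.
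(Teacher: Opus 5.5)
Your proposal is correct and follows essentially the same route as the paper. The paper quotes this result from \citet[Cor.~8]{P} without proof, and the text around it supplies exactly the Floquet ingredients of your case analysis on $\Delta$: the multiplier structure $\{1,\mu,\mu^{-1}\}$ of Proposition~\ref{prop:mult}, the rate $\sigma=\frac{\Omega}{2\pi}\ln|\mu|$, and the observation via \eqref{eq:oneperiod} that the sign of $c_*$ only exchanges $\mathbb M$ and $\mathbb M^{-1}$. One harmless slip: at $\Delta=2$ the multiplier $1$ is triple, so a Jordan block could a priori give growth quadratic rather than linear in $n$, which is still covered by the statement's ``algebraically''.
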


\noindent
We will use three results from \citet{P}. The first concerns a neutrally stratified level, namely $\nu=0$, for which the analysis is explicit \citep[\S3.1.5, Thm.~11]{P}. The last equation of \eqref{eq:sysP} reduces to $\tM'=0$, so $\tM$ is constant and the monodromy matrix is block triangular,
\begin{equation}\label{eq:Mblock}
\mathbb M=\begin{pmatrix}X(2\pi)&\ast\\ 0&1\end{pmatrix},\qquad\text{hence}\qquad \Delta=\operatorname{tr}X(2\pi),
\end{equation}
where $X(\theta)$ is the fundamental matrix of $\tV'=G(\theta)\tV$, with $X(0)=\mathbb I$. Up to the factor $kc_*$ and the reflection $R$ of Proposition~\ref{prop:equiv}, $G$ is the velocity gradient of the basic flow along the trajectory, and we split it into its symmetric part, the strain, and its antisymmetric part, which is proportional to the vorticity of the wave:
\begin{equation}\label{eq:Gsplit}
G(\theta)=\frac{\e}{\cD}\begin{pmatrix}-\sin\theta&\cos\theta\\ \cos\theta&\sin\theta\end{pmatrix}+\frac{\e^2}{\cD}\,J,\qquad J=\begin{pmatrix}0&-1\\ 1&0\end{pmatrix}.
\end{equation}
The strain matrix has eigenvalues $\pm1$. To find its principal axes, let $\mathcal R_\chi$ be the rotation by the angle $\chi$; then
\begin{equation}\label{eq:Rchi}
\mathcal R_\chi=\begin{pmatrix}\cos\chi&-\sin\chi\\ \sin\chi&\cos\chi\end{pmatrix},\qquad
\mathcal R_\chi\begin{pmatrix}1&0\\ 0&-1\end{pmatrix}\mathcal R_\chi^T=\begin{pmatrix}\cos2\chi&\sin2\chi\\ \sin2\chi&-\cos2\chi\end{pmatrix},
\end{equation}
which coincides with the matrix in \eqref{eq:Gsplit} if $\cos2\chi=-\sin\theta$ and $\sin2\chi=\cos\theta$, that is, if $2\chi=\theta+\pi/2$. Hence
\begin{equation}\label{eq:Gstrain}
G(\theta)=\frac{\e}{\cD}\,\mathcal R_{\chi(\theta)}\begin{pmatrix}1&0\\ 0&-1\end{pmatrix}\mathcal R_{\chi(\theta)}^T+\frac{\e^2}{\cD}\,J,\qquad \chi(\theta)=\frac{\theta}{2}+\frac{\pi}{4}:
\end{equation}
the principal axes of the strain, which are the columns of $\mathcal R_{\chi(\theta)}$, rotate at half the orbital rate. We therefore write the amplitude in the frame that rotates with them,
\begin{equation*}
\tV(\theta)=\mathcal R_{\chi(\theta)}\,\mathbf W(\theta).
\end{equation*}
Since $\mathrm d\mathcal R_\chi/\mathrm d\chi=\mathcal R_\chi J$ and $\chi'=1/2$, one has $\tV'=\mathcal R_\chi\bigl(\mathbf W'+\tfrac12J\mathbf W\bigr)$, and since rotations commute with $J$, $\mathcal R_\chi^TJ\mathcal R_\chi=J$. Inserting \eqref{eq:Gstrain} into $\tV'=G\tV$ and multiplying by $\mathcal R_\chi^T$ gives the autonomous system
\begin{equation}\label{eq:K}
\begin{gathered}
\mathbf W'=K\mathbf W,\qquad
K=\frac{\e}{\cD}\begin{pmatrix}1&0\\ 0&-1\end{pmatrix}+\Bigl(\frac{\e^2}{\cD}-\frac12\Bigr)J=\begin{pmatrix}\eta&-\varpi\\ \varpi&-\eta\end{pmatrix},\\
\eta:=\frac{\e}{\cD},\qquad \varpi:=\frac{\e^2}{\cD}-\frac12=\frac{3\e^2-1}{2\cD}.
\end{gathered}
\end{equation}
In this frame the perturbation sees a steady strain of rate $\eta$ and a rotation of rate $\varpi$, the rotation due to the vorticity of the wave minus that of the frame. Since $\operatorname{tr}K=0$ and $\det K=\varpi^2-\eta^2$, the eigenvalues of $K$ are $\pm\Lambda$, with
\begin{equation}\label{eq:Lambda}
\Lambda^2=\eta^2-\varpi^2=\frac{4\e^2-(3\e^2-1)^2}{4\cD^2}=\frac{(9\e^2-1)(1-\e^2)}{4\cD^2}=\frac{9\e^2-1}{4(1-\e^2)} .
\end{equation}
The fundamental matrix is $X(\theta)=\mathcal R_{\chi(\theta)}\,e^{K\theta}\,\mathcal R_{\chi(0)}^T$. During one period the strain axes turn by $\pi$, $\chi(2\pi)=\chi(0)+\pi$, so that $\mathcal R_{\chi(2\pi)}=-\mathcal R_{\chi(0)}$ and
\begin{equation}\label{eq:Delta0}
X(2\pi)=-\mathcal R_{\chi(0)}\,e^{2\pi K}\,\mathcal R_{\chi(0)}^T,\qquad
\Delta(\e,0)=-\operatorname{tr}e^{2\pi K}=-2\cosh(2\pi\Lambda).
\end{equation}
Three cases follow. If $\e>1/3$, then $\Lambda>0$, the strain dominates the rotation and $\Delta<-2$: the level is unstable, with growth rate $\sigma=\Omega\Lambda$, and the growing multiplier $-e^{2\pi\Lambda}$ is negative, so that the growing mode changes sign at every orbital period, during which the strain axes turn by $\pi$. An exact growing solution is $\tM\equiv0$, $\tV(\theta)=e^{-\Lambda\theta}\,\mathcal R_{\chi(\theta)}\,(\varpi,\ \eta+\Lambda)^T$, built on the eigenvector of $K$ for the eigenvalue $-\Lambda$: it grows as $\theta$ decreases, that is, in physical time when $c_*>0$, while for $c_*<0$ the growing solution is $e^{\Lambda\theta}\,\mathcal R_{\chi(\theta)}\,(\varpi,\ \eta-\Lambda)^T$. If $\e=1/3$, then $\eta=-\varpi=3/8$, $K^2=0$ and $e^{K\theta}=\mathbb I+K\theta$, so that the growth is algebraic (linear in $\theta$). If $\e<1/3$, then $\Lambda=\mathrm i|\Lambda|$ with $0<|\Lambda|<1/2$, $\Delta(\e,0)=-2\cos(2\pi|\Lambda|)\in(-2,2)$, and every solution is bounded. For a homogeneous incompressible liquid, $\nu=0$ and the particle paths \eqref{eq:paths} with $\alpha=u=v=0$ are those of Gerstner's wave, so that \eqref{eq:Delta0} recovers the threshold $\e>1/3$ found by \citet{L04}, who developed the short-wavelength method for flows given in Lagrangian form and applied it to Gerstner's wave; the computation above makes the mechanism explicit, since instability occurs exactly when the strain exceeds the effective rotation felt by the perturbation in the frame of the principal axes of strain, $\eta>|\varpi|$.\\Second, $\Delta(\e,\nu^2)$ is real-analytic on $(0,1)\times\mathbb R$, and at $(1/3,0)$ \citep[Thm.~13]{P}
\begin{equation}\label{eq:derivs}
\frac{\partial\Delta}{\partial\e}=-\frac{27\pi^2}{4},\qquad \frac{\partial\Delta}{\partial(\nu^2)}=-3\pi^2,
\end{equation}
so that, near $(1/3,0)$, the unstable set $\mathcal S:=\{|\Delta|>2\}$ is $\{\e>\e_c(\nu^2)\}$, with $\e_c$ real-analytic and
\begin{equation}\label{eq:ecP}
\e_c(\nu^2)=\frac13-\frac49\,\nu^2+O(\nu^4).
\end{equation}
Third, since $\mu\mu^{-1}=1$, a real pair of multipliers cannot change sign without leaving the real axis, so for $\nu^2<0$ the convective instability ($\Delta>2$) and the wave-induced one ($\Delta<-2$) are separated by a stable band $\e_a<\e<\e_b$, with $\e_a\approx(|\nu^2|/4)^{1/4}$ and $\e_b\approx\frac13+\frac49|\nu^2|$ for small $|\nu^2|$ \citep[\S3.1.6]{P}.
 

\subsection{Instability results for Godin's solution}\label{sec:results}
We now apply these results to Godin's waves. By Proposition~\ref{prop:equiv}, every level of a Godin wave corresponds to a point $(\e,\nu^2)$ of the parameter plane of \citet{P}, with $\nu^2=(1-\e^2)\nu_0^2$, so that the stability of the whole wave field is decided by the discriminant $\Delta$. Three features are new with respect to \citet{P}. First, the stratification is measured by the buoyancy frequency $N_0$ of the hydrostatic reference state, through $\nu_0^2=\cos^2\!\alpha\,N_0^2/\Omega^2$: the natural stability diagram is the $(\e,\nu_0^2)$-plane, in which an isothermal layer lies on a single horizontal line. Second, the results hold for Godin's whole class: surface and edge waves, for which only the effective gravity $g\cos\alpha$ acts in the plane of motion, waves on uniform currents, and a general equation of state. Third, the consequences are different: in an isentropic fluid the threshold $\e>1/3$ of \citet{L04} extends to compressible surface and edge waves (\S~\ref{cor:isentropic}), weak stratification lowers it by $\frac{32}{81}\nu_0^2$ (\S~\ref{cor:weak}), and the resonance tongues issuing from $\nu_0=n/2$, which were out of range for the mountain waves of \citet{P}, become relevant for the fast acoustic-gravity waves of \citet{G} (\S~\ref{cor:poly} and Section~\ref{sec:physics}).\\
By Proposition~\ref{prop:equiv}, the amplitudes of a perturbation with $\bxi=\be_2$ on the level $(a_2,a_3)$ obey the system \eqref{eq:sysP} studied by \citet{P}, with $\e=e^{ka_2}$ and $\nu^2=(1-\e^2)\,\nu_0^2(b)$. The results recalled in Section~\ref{sec:reduction} therefore hold level by level, and it remains to express them in terms of the steepness $\e$ and of the reference stratification $\nu_0^2$. One Floquet multiplier equals $1$ and the other two are $\mu$ and $\mu^{-1}$ (Proposition~\ref{prop:mult}); the multiplier $1$ corresponds to the conservation of $\mathcal I$ in \eqref{eq:I}. The identification of $\mathcal I$ with the leading-order Ertel potential vorticity of the perturbation \citep[App.~A]{P} carries over to Godin's fluids, with the specific entropy $S$ in place of $\ln\Theta$: $S$ is materially conserved, and $(\nabla\rho\times\nabla P)\cdot\nabla S=0$ because $\rho=\rho(P,S)$. 
Finally, by Theorem~\ref{thm:criterion}, the level is exponentially unstable if and only if $|\Delta(\e,\nu^2)|>2$, with the growth rate \eqref{eq:sigmaP}, in which $\Omega=\sqrt{gk\cos\alpha}$ is the orbital frequency of the particles.\\
\\
Since $\nu^2=(1-\e^2)\,\nu_0^2$, a level is unstable if and only if the point $(\e,(1-\e^2)\nu_0^2)$ lies in the unstable region of \citet[figure~5]{P}; in the $(\e,\nu_0^2)$-plane of figure~\ref{fig:diagram}, the stability diagram is therefore that of \citet{P}, with the stratification axis stretched, at each steepness, by the factor $1/(1-\e^2)$. The limit of small steepness can be analysed explicitly. The coefficients of \eqref{eq:sysP} depend analytically on $(\e,\nu^2)$ for $0\le\e<1$, so that $\Delta$ is continuous up to $\e=0$ \citep[Thm.~13]{P} and $\Delta\bigl(\e,(1-\e^2)\nu_0^2\bigr)\to\Delta(0,\nu_0^2)$ as $\e\to0$. At $\e=0$ the coefficients no longer depend on $\theta$: $G\equiv0$ and $\bq\equiv(0,1)^T$ in \eqref{eq:Gq}, so that $\widetilde V_1$ is constant and $\widetilde V_2''=-\nu_0^2\,\widetilde V_2$. The perturbation is then a buoyancy oscillation of frequency $N_0\cos\alpha=\Omega\nu_0$, whose discriminant over one wave period is $\Delta(0,\nu_0^2)=2\cos(2\pi\nu_0)$ if $\nu_0^2\ge0$ and $\Delta(0,\nu_0^2)=2\cosh(2\pi|\nu_0|)$ if $\nu_0^2<0$.

In a stably stratified fluid, therefore, $|\Delta(0,\nu_0^2)|\le2$, with equality only at $\nu_0=n/2$, $n=1,2,\dots$, where the buoyancy frequency $N_0\cos\alpha$ is an integer multiple of half the orbital frequency $\Omega$, the classical condition for parametric resonance. By continuity, waves of small steepness can destabilise a stably stratified level only near these resonances, from which the unstable tongues of figure~\ref{fig:diagram} emanate; since $\Delta(0,\nu_0^2)=2(-1)^n$ there, the instability is subharmonic ($\Delta<-2$) for odd $n$ and harmonic ($\Delta>2$) for even $n$. In a statically unstable reference state, instead, $\Delta(0,\nu_0^2)>2$, so that a level with $\nu_0^2<0$ is unstable whenever its steepness is small enough, with a growth rate \eqref{eq:sigmaP} that tends to the convective value $\Omega|\nu_0|=|N_0|\cos\alpha$ as $\e\to0$. At larger steepness this convective instability is separated from the wave-induced one by the thin stable band recalled in Section~\ref{sec:reduction}; at $\nu_0^2=-0.1$, for instance, the levels with $0.334<\e<0.366$ are stable. These results are summarised in the following theorem.

\begin{theorem}\label{thm:main}
Let $u,v$ be constant and fix a level $(a_2,a_3)$ of the fluid, with $\e=e^{ka_2}$ and $\nu^2=(1-\e^2)\nu_0^2(b)$. For the perturbations \eqref{eq:xi_e2} with $\boldsymbol{\xi}(0)=\be_2$:
\begin{enumerate}[label=(\roman*),leftmargin=1em]
\item the Floquet multipliers are $\{1,\mu,\mu^{-1}\}$; the multiplier $1$ is carried by the conserved quantity $\mathcal I=\nu^2\,\bp\cdot\tV-2\e^2\tM$, proportional to the leading-order Ertel potential vorticity of the perturbation (Proposition~\ref{prop:mult} and \eqref{eq:I});
\item the level is exponentially unstable if $|\Delta(\e,\nu^2)|>2$, where $\Delta$ is the discriminant \eqref{eq:Delta}, and the growth rate is
\begin{equation}\label{eq:sigma}
\sigma=\frac{\Omega}{2\pi}\,\operatorname{arcosh}\frac{|\Delta|}{2},\qquad \Omega=\sqrt{gk\cos\alpha};
\end{equation}
\item as $\e\to0$, $\Delta\to2\cos(2\pi\nu_0)$ if $\nu_0^2\ge0$ and $\Delta\to2\cosh(2\pi|\nu_0|)$ if
$\nu_0^2<0$; hence waves of small steepness can destabilise a stably stratified level only near the
resonances $\nu_0=n/2$, $n=1,2,\dots$, whereas a level with $\nu_0^2<0$ is unstable whenever its
steepness is small enough, with a growth rate that tends to $|N_0|\cos\alpha$.
\end{enumerate}
\end{theorem}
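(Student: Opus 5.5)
The plan is to assemble the theorem from Proposition~\ref{prop:equiv}, Proposition~\ref{prop:mult} and Theorem~\ref{thm:criterion}, plus one explicit computation at $\e=0$. The first step is to fix the level $(a_2,a_3)$ and use \eqref{eq:xi_e2}. With $\bxi(0)=\be_2$, we have $\bxi(t)\equiv\be_2$ and $\mathfrak U_2\equiv0$, so the WKB system \eqref{eq:wkb} along the trajectory through the level reduces exactly to \eqref{eq:sys_t}. Along that trajectory the quantities $\rho$, $N_0(b)$ and $\e$ are constant, because $b$ depends only on $(a_2,a_3)$. Proposition~\ref{prop:equiv} then turns \eqref{eq:sys_t} into the $2\pi$-periodic system \eqref{eq:sysP} with parameters $(\e,\nu^2)$, where $\nu^2=(1-\e^2)\nu_0^2(b)$. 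The change of variables $(\bV,m)\mapsto(\tV,\tM)$ is linear, invertible and time-independent, so boundedness and exponential rates transfer unchanged.

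For item (i), I apply Proposition~\ref{prop:mult} to \eqref{eq:sysP}. It gives $\operatorname{tr}\mathbb A=0$ and the reversibility $\Sigma\mathbb A(-\theta)\Sigma=-\mathbb A(\theta)$, hence multipliers $\{1,\mu,\mu^{-1}\}$. To attach the multiplier $1$ to $\mathcal I$, I verify directly that $\mathcal I'=0$ along solutions. Using $\bp'=-G\bp$, $G-G^T=\frac{2\e^2}{\cD}J$ and $\bp\cdot\bq=0$, one gets $(\bp\cdot\tV)'=\bp\cdot(G-G^T)\tV=\frac{2\e^2}{\cD}\bp\cdot J\tV$. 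Since $\bp\cdot J\tV=-\bq\cdot\tV$, this equals $-\frac{2\e^2}{\cD}\bq\cdot\tV$. Multiplying by $\nu^2$, this cancels $2\e^2\tM'=-\frac{2\e^2\nu^2}{\cD}\bq\cdot\tV$. The corresponding row vector is therefore a left eigenvector of $\mathbb M$ with eigenvalue $1$. The link to Ertel potential vorticity is quoted from \citet[App.~A]{P} as discussed above.

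For item (ii), Theorem~\ref{thm:criterion} applied at $(\e,\nu^2)$ gives exponential growth exactly when $|\Delta|>2$, at rate $\frac{1}{2\pi}\operatorname{arcosh}(|\Delta|/2)$ per unit $\theta$. By \eqref{eq:oneperiod}, one physical period $T=2\pi/\Omega$ corresponds to a map similar to $\mathbb M^{\pm1}$, whichever the sign of $c_*$. This gives \eqref{eq:sigma} with $\Omega=\sqrt{gk\cos\alpha}$ from \eqref{eq:disp}.

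Item (iii) is the only part needing an argument beyond citation, and I expect the continuity at $\e=0$ to be the main obstacle. The entries of $\mathbb A$ in \eqref{eq:A} are analytic in $(\e,\nu^2)$ on $[0,1)\times\mathbb R$, since $\cD\ge 1-\e^2>0$. By continuous dependence of solutions of linear ODEs on parameters over the compact interval $[0,2\pi]$, $\mathbb M$ and hence $\Delta$ are continuous up to $\e=0$. Because $\nu^2\to\nu_0^2$ as $\e\to0$, this gives $\Delta(\e,(1-\e^2)\nu_0^2)\to\Delta(0,\nu_0^2)$. At $\e=0$ we have $G=0$ and $\bq=(0,1)^T$. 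The system then decouples into $\widetilde V_1'=0$ and $\widetilde V_2''=-\nu_0^2\widetilde V_2$, so $\mathbb M=\mathrm{diag}(1,B)$ with $B$ the $2\pi$-monodromy of this harmonic oscillator. Then $\operatorname{tr}B=2\cos(2\pi\nu_0)$, or $2\cosh(2\pi|\nu_0|)$ when $\nu_0^2<0$, and $\Delta=\operatorname{tr}\mathbb M-1=\operatorname{tr}B$. The consequences follow from this limit. If $\nu_0^2>0$, then $|\Delta(0,\cdot)|<2$ away from $\nu_0\in\frac12\mathbb N$, so continuity excludes instability for small $\e$ there. If $\nu_0^2<0$, then $\Delta(0,\cdot)>2$, so $|\Delta|>2$ for small $\e$. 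Finally, \eqref{eq:sigma} converges to $\frac{\Omega}{2\pi}\cdot2\pi|\nu_0|=\Omega|\nu_0|=|N_0|\cos\alpha$.
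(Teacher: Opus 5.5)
Your proposal is correct and follows the paper's route: you reduce each level via Proposition~\ref{prop:equiv}, read off (i)--(ii) from Proposition~\ref{prop:mult} and Theorem~\ref{thm:criterion}, and get (iii) from continuity of $\Delta$ up to $\e=0$ together with the explicit buoyancy-oscillator monodromy at $\e=0$. Your direct check that $\mathcal I'=0$ (via $\bp'=-G\bp$, $G-G^T=\frac{2\e^2}{\cD}J$ and $\bp\cdot J\tV=-\bq\cdot\tV$) and your continuous-dependence argument at $\e=0$ make self-contained two points that the paper instead cites from \citet{P}.
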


\subsubsection{Isentropic fluids}\label{cor:isentropic}
If $N_0\equiv0$, it follows from \eqref{eq:change} that $\nu^2=\nu_0^2=0$, and a level is unstable if and only if $\e>1/3$. The instability is a subharmonic parametric resonance, $\Delta=-2\cosh(2\pi\Lambda)<-2$ with $\Lambda=\frac12\sqrt{(9\e^2-1)/(1-\e^2)}$, and its growth rate is
\begin{equation}\label{eq:sigma0}
\sigma=\Omega\Lambda=\frac{\Omega}{2}\sqrt{\frac{9\e^2-1}{1-\e^2}},\qquad \Omega=\sqrt{gk\cos\alpha}.
\end{equation}
This covers the homogeneous incompressible liquid, for which the case $\alpha=0$ is Leblanc's result \citet{L04}; any isentropic or barotropic fluid \cite[(4.6)]{G}; and the adiabatic polytropic atmosphere $n=\gamma$ \cite[(4.9)]{G}. \\
For $\alpha>0$ it gives the instability of compressible Gerstner edge waves with the same threshold, extending the incompressible edge-wave analysis of \citet{IK14}.

\subsubsection{Weak reference stratification}\label{cor:weak}
Near the neutral threshold $\e=1/3$ given by \eqref{eq:Delta0}, the effect of a weak reference stratification can be computed explicitly. Let $\hat\Delta(\e,\nu_0^2):=\Delta\bigl(\e,(1-\e^2)\nu_0^2\bigr)$ be the discriminant written in terms of the reference stratification; like $\Delta$, it is real-analytic. At $(\e,\nu_0^2)=(1/3,0)$, where $\hat\Delta=-2$, its derivatives follow from \eqref{eq:derivs} by the chain rule. The derivative with respect to the steepness is unchanged, $\partial\hat\Delta/\partial\e=\partial\Delta/\partial\e=-27\pi^2/4$, because the term that comes from the factor $1-\e^2$ is multiplied by $\nu_0^2=0$, while the derivative with respect to the stratification acquires the factor $1-\e^2=8/9$, so that $\partial\hat\Delta/\partial(\nu_0^2)=\frac89\,(-3\pi^2)=-8\pi^2/3$. Since $\partial\hat\Delta/\partial\e\neq0$, the implicit function theorem applied to $\hat\Delta=-2$ gives, near $(1/3,0)$, a real-analytic curve $\e=\ech(\nu_0^2)$ with $\ech(0)=1/3$ and slope $\ech'(0)=-(-8\pi^2/3)/(-27\pi^2/4)=-32/81$, that is,
\begin{equation}\label{eq:ec}
\ech(\nu_0^2)=\frac13-\frac{32}{81}\,\nu_0^2+O(\nu_0^4).
\end{equation}
Near $(1/3,0)$, moreover, $\hat\Delta$ is close to $-2$ and decreases as $\e$ increases, so that $|\hat\Delta|>2$ if and only if $\hat\Delta<-2$, that is, if and only if $\e>\ech(\nu_0^2)$. For weak stratification and steepness close to $1/3$, a level is therefore unstable if and only if its steepness exceeds $\ech(\nu_0^2)$: a stable stratification lowers the critical steepness below $1/3$, and an unstable one raises it, $\ech$ being then the upper edge of the stable band of Section~\ref{sec:reduction}. The coefficient $-32/81$ is the coefficient $-4/9$ of \eqref{eq:ecP} multiplied by $1-\e^2=8/9$, because on the neutral threshold $\nu^2=\frac89\,\nu_0^2$.


\subsubsection{Polytropic and isothermal atmospheres}\label{cor:poly}
For an ideal gas with constant $\gamma$ and a polytropic reference state $P_s\propto\rho_s^{\,n}$, inserting the buoyancy frequency \eqref{eq:N0poly} and $\Omega^2=gk\cos\alpha$ into $\nu_0^2=\cos^2\!\alpha\,N_0^2/\Omega^2$ gives
\begin{equation}\label{eq:nu0poly}
\nu_0^2(b)=\frac{\cos\alpha}{k\,H(b)}\Bigl(\frac1n-\frac1\gamma\Bigr),\qquad H=\frac{R_dT}{g}:
\end{equation}
the stratification parameter is proportional to the wavelength measured in scale heights, and it depends on the level only through the temperature $T(b)$. In an isothermal layer ($n=1$), $1-1/\gamma=\kappa$ and $\nu_0^2=\kappa\cos\alpha/(kH)$ is the same at every level: the whole wave field lies on one horizontal line of the $(\e,\nu_0^2)$-diagram of figure~\ref{fig:diagram}, and its stability is read off that line. The principal subharmonic resonance $\nu_0=1/2$ occurs at $kH=4\kappa\cos\alpha$, that is, at the wavelength $\lambda=\pi H/(2\kappa\cos\alpha)=\frac{7\pi}{4}H/\cos\alpha\approx5.5\,H/\cos\alpha$ for $\kappa=2/7$, and the harmonic resonance $\nu_0=1$ at $\lambda=2\pi H/(\kappa\cos\alpha)=7\pi H/\cos\alpha\approx22\,H/\cos\alpha$. In a super-adiabatic layer ($n>\gamma$), instead, $\nu_0^2<0$, and by Theorem~\ref{thm:main}(iii) every level is unstable whenever its steepness is small enough; at larger steepness the convective and the wave-induced instabilities are separated by the thin stable band of Section~\ref{sec:reduction}.

\subsection{The stability diagram}\label{sec:numerics}
Figure~\ref{fig:diagram}(\textit{a}) shows the growth rate of the instability in the $(\e,\nu_0^2)$-plane; the numerical method is described in Appendix~\ref{app:numerics}.

\begin{figure}
\centering
\includegraphics[width=\textwidth]{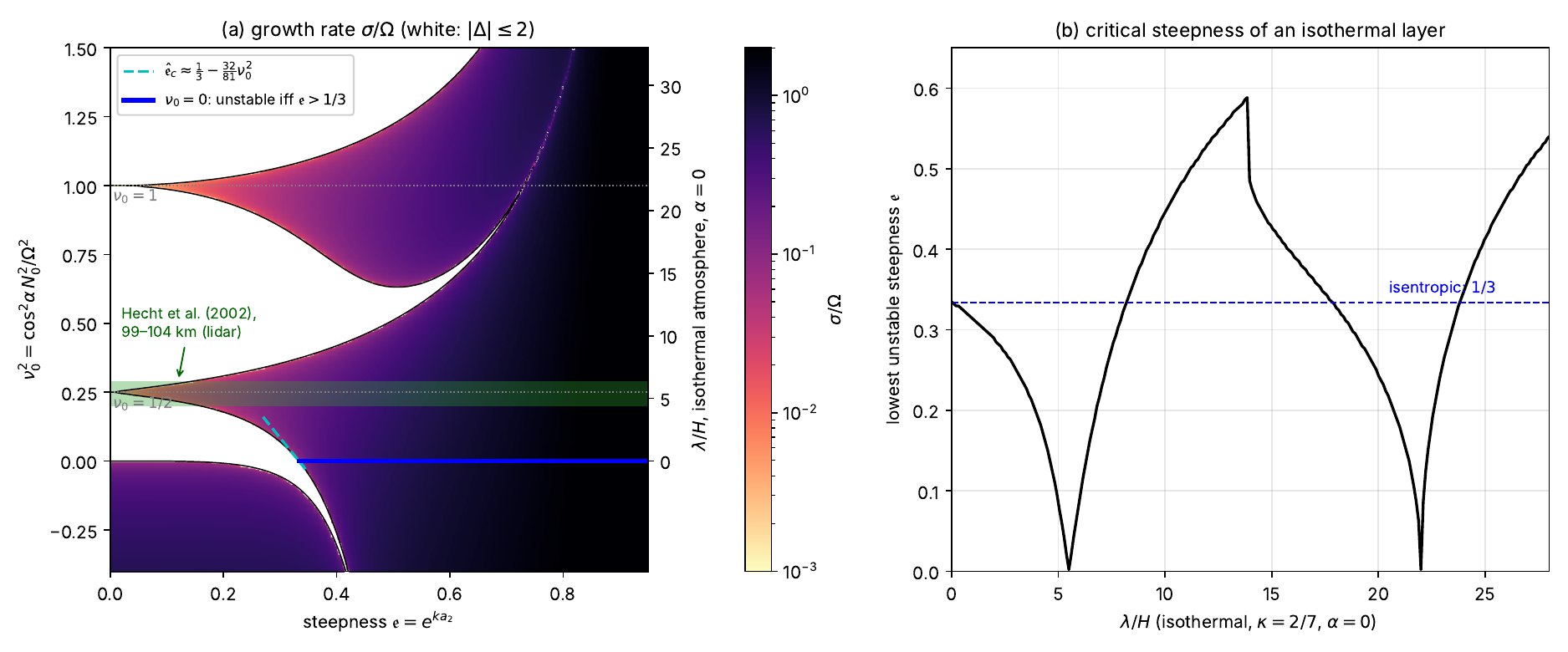}
\caption{(\textit{a}) Growth rate $\sigma/\Omega$ in the $(\e,\nu_0^2)$-plane, white where $|\Delta|\le2$, with the curves $|\Delta|=2$ in black. The right axis converts $\nu_0^2$ into $\lambda/H$ for an isothermal atmosphere ($\alpha=0$, $\kappa=2/7$). The green band is the range of $\nu_0^2$ measured between 99 and 104\,km for the wave observed by \citet{H02} (Section~\ref{sec:physics}). (\textit{b}) Lowest unstable steepness along each horizontal line of (\textit{a}), as a function of $\lambda/H$ for an isothermal layer.}
\label{fig:diagram}
\end{figure}

The neutral curve leaves $(1/3,0)$ with the slope $-32/81$ of \eqref{eq:ec} and ends at the tip $(0,1/4)$ of the tongue issuing from $\nu_0=1/2$; including the next term, the computed curve is $\ech\approx\frac13-\frac{32}{81}\nu_0^2-0.922\,\nu_0^4$, with a relative error below $1.5\,\%$ for $\nu_0^2\le0.1$. For $0.20\le\nu_0^2\le0.30$ the critical steepness is below $0.18$, and it vanishes at $\nu_0^2=1/4$. In an isothermal layer (panel \textit{b}) the critical steepness is $1/3$ for short waves, falls to zero at $\lambda/H=7\pi/4\approx5.50$, where $\nu_0=1/2$, is back at $1/3$ at $\lambda/H\approx8.2$ and peaks at $0.587$ at $\lambda/H\approx13.9$ ($\nu_0^2\approx0.631$). For longer waves the lowest unstable steepness lies in the tongue issuing from $\nu_0=1$, where it is below $1/3$ for $17.8\lesssim\lambda/H\lesssim23.8$ and vanishes at $\lambda/H=7\pi\approx22.0$. For $\nu_0^2>0.631$ a horizontal line crosses both this tongue, where $\Delta>2$, and the main unstable region, where $\Delta<-2$; since $\Delta$ must cross $[-2,2]$ to change sign, the two are separated by a stable gap, which narrows as $\nu_0^2$ grows, from $0.517<\e<0.587$ at $\nu_0^2=0.632$ to $0.661<\e<0.669$ at $\nu_0^2=0.8$.

\subsection{Where the instability lives}\label{sec:where}

\subsubsection{Surface waves ($\alpha=0$)} The steepness $\e=e^{ka_2}$ increases with $a_2$ up to its surface value $\e_0:=e^{ka_2^{(0)}}$. The free surface is a trochoid of crest-to-trough height $H_w=2\e_0/k$, so $\e_0=\pi H_w/\lambda$. In an isentropic fluid the wave is therefore unstable if and only if
\begin{equation}\label{eq:surf}
a_2^{(0)}>-\frac{\ln3}{k}\qquad\Longleftrightarrow\qquad \frac{H_w}{\lambda}>\frac{1}{3\pi}\approx0.106,
\end{equation}
\begin{figure}
\centering
\includegraphics[width=.8\textwidth]{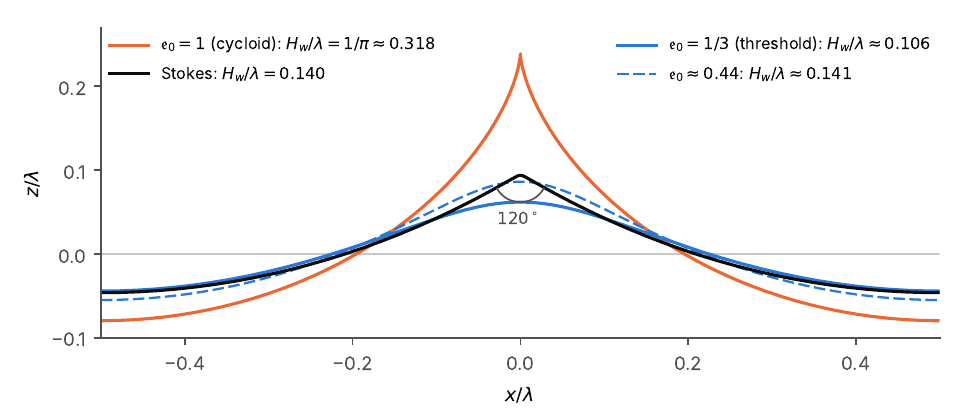}
\caption{Free surfaces of waves of wavelength $\lambda$, with zero mean level. Blue and orange: Godin's waves, which have the kinematics of Gerstner's, with surface steepness $\e_0=1/3$, the instability threshold ($H_w/\lambda=1/(3\pi)$); $\e_0\approx0.44$, with the height of the highest Stokes wave (dashed); and $\e_0=1$, the steepest Gerstner wave, a cycloid with cusped crests ($H_w/\lambda=1/\pi$). Black: a Stokes wave with $H_w/\lambda=0.140$, close to the highest one ($H_w/\lambda\approx0.141$, crest angle $120^\circ$), computed from the equation of \citet{Babenko87}.}
\label{fig:surfaces}
\end{figure}
to be compared with $H_w/\lambda=1/\pi\approx0.318$ for the steepest Gerstner wave, $\e_0=1$, whose free surface is a cycloid with cusped crests: the instability sets in at one third of the maximal height. The threshold also lies below the height of the highest irrotational (Stokes) wave, $H_w/\lambda\approx0.141$, whose crest encloses an angle of $120^\circ$ \citep{Stokes1880,Williams1981}, so that a Gerstner wave as high as the highest Stokes wave is already unstable (figure~\ref{fig:surfaces}). The unstable levels form the layer $-\ln3/k<a_2<a_2^{(0)}$, of mean thickness $k^{-1}\ln(3\e_0)$. By \S~\ref{cor:weak}, a weak and uniform stable reference stratification moves the base of this layer down by $\frac{32}{27}\nu_0^2/k+O(\nu_0^4)$: differentiate $k^{-1}\ln\ech$ and use $\ech(0)=1/3$.

\subsubsection{Edge waves ($\alpha>0$)} Write $\beta(a_2):=a_2-e^{2ka_2}/(2k)$, which is increasing, so that $b=\beta(a_2)\cos\alpha-a_3\sin\alpha$. The fluid occupies $\{a_3\le0,\ b\le b_0\}$, and at distance $x_2=-a_3$ from the wall the free surface sits at $\beta(a_2)=(b_0-x_2\sin\alpha)/\cos\alpha$. The steepest particles are therefore on the wall at the shoreline, where $\e_*:=e^{ka_2^*}$ with $\beta(a_2^*)=b_0/\cos\alpha$; admissibility requires $a_2^*<0$. Up to the orbital excursion $O(\e/k)$, the label $a_2$ is the up-slope coordinate $x_3$. For a uniform $\nu_0^2$ with critical steepness $\e_c$, the unstable set $\{a_2>k^{-1}\ln\e_c\}$ is a wedge bounded by the wall, the free surface and the plane $a_2=k^{-1}\ln\e_c$:
\begin{itemize}[leftmargin=2em]
\item it extends a distance
$L:=\beta(a_2^*)-\beta(k^{-1}\ln\e_c)=k^{-1}\bigl[\ln(\e_*/\e_c)-(\e_*^2-\e_c^2)/2\bigr]$ down the slope
from the mean shoreline;
\item its thickness normal to the wall is at most $L\cot\alpha$;
\item it is non-empty if and only if $\e_*>\e_c$, that is $\e_*>1/3$ when $N_0\equiv0$.
\end{itemize}
\begin{figure}
\centering
\includegraphics[width=.8\textwidth]{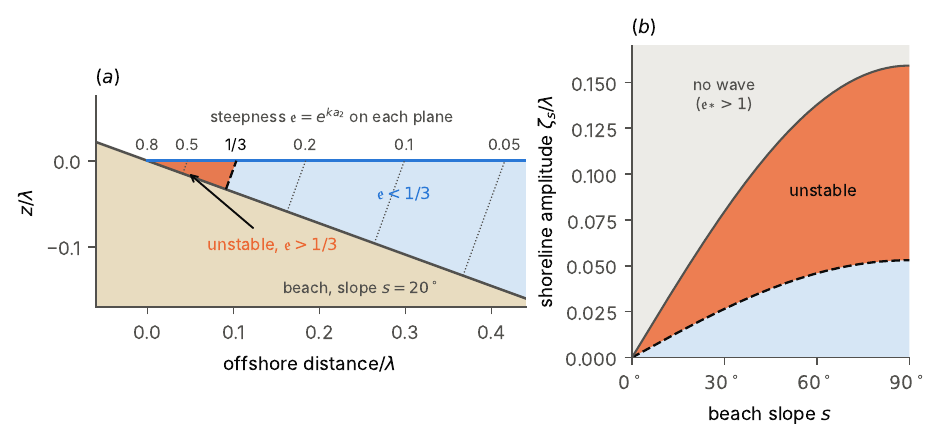}
\caption{Instability of Godin's edge waves for $N_0\equiv0$, where the critical steepness is $1/3$. (\textit{a}) Section normal to the shoreline, averaged along the direction of propagation, for a beach slope $s=20^\circ$ and shoreline steepness $\e_*=0.8$: the steepness $\e=e^{ka_2}$ is uniform on planes normal to the beach (dotted, labelled by $\e$), the mean free surface is horizontal, and the unstable region (orange) is the triangle between the beach, the free surface and the plane $\e=1/3$ (dashed). (\textit{b}) Unstable edge waves (orange) in terms of the beach slope $s$ and of the vertical amplitude $\zeta_s=\e_*\sin s/k$ of the shoreline oscillation: the wave is unstable for $\zeta_s/\lambda>\sin s/(6\pi)$ and exists for $\zeta_s/\lambda<\sin s/(2\pi)$, i.e.\ $\e_*<1$.}
\label{fig:edge}
\end{figure}
In terms of the beach slope $s=\pi/2-\alpha$ and the vertical amplitude $\zeta_s=\e_*\sin s/k$ of the shoreline oscillation, an edge wave is unstable when $\zeta_s/\lambda>\sin s/(6\pi)$ for $N_0\equiv0$ (figure~\ref{fig:edge}): on a beach of slope $s=5^\circ$, an edge wave of wavelength 50\,m is unstable as soon as the shoreline oscillates by more than 23\,cm.\\
The factor $\cos^2\!\alpha$ in $\nu_0^2$ makes stratification almost irrelevant on gentle beaches. One has $\nu_0^2=\sin s\;N_0^2/(gk)$; for $s=5^\circ$, $\lambda=50$\,m and $N_0=0.02\,\mathrm{s^{-1}}$ this is $2.8\times10^{-5}$, and $\e_c$ differs from $1/3$ by $1.1\times10^{-5}$. On natural beaches, edge waves are also modified by the bottom topography \citep{KDL81}, which Godin's solution does not include.

\section{Waves on currents}\label{sec:currents}

For a uniform current, with both $u$ and $v$ constant, the reduction of
\S\ref{sec:reduction} remains unchanged. Indeed,
\[
\Omega=k|c+v|=(gk\cos\alpha)^{1/2}
\]
is independent of $v$, which therefore only Doppler-shifts the frequency
observed in the laboratory frame. Theorem~\ref{thm:main} and the results of \S \ref{sec:results} apply verbatim. This is consistent with
\citet[Remark~1]{P}, where the contributions involving $\bar u$ and
$\bar w$ drop out, and includes the configuration used by \citet{G} to
interpret the observations of \citet{H02}; see Section~\ref{sec:physics}.

The same argument applies locally to the discontinuous-current solution
of \citet[(4.12)]{G}. In each layer $m=1,2$, the flow is a
uniform-current wave with the same wavenumber $k$, intrinsic frequency
$\Omega=(gk)^{1/2}$, and wave-amplitude factor $e^{ka_2}$. These
quantities are continuous across the interface $a_2=A$, as is the
pressure, although $\rho$ and $C$ may have finite jumps there
\citep{G}. Since the instability criterion is local along particle
trajectories, and no trajectory crosses $a_2=A$,
Theorem~\ref{thm:main} applies at every point away from the interface,
with the reference stratification chosen from the layer containing that
point. This conclusion also covers Godin's standing-wave configuration,
for which $v_1+v_2=0$. At the interface itself, however, the jump in
tangential velocity produces a vortex sheet and hence the possibility
of Kelvin--Helmholtz instability. Such an interfacial mechanism is not
described by the present short-wavelength analysis.

A qualitatively different situation arises when the transverse current
is sheared. For $\alpha=0$, Godin's solution \cite[(4.10)]{G} permits
$u=u(a_1,a_2)$. Although the velocity field remains independent of
$x_2$, one now has
\(
(\nabla\bU)^T\be_2=\nabla u\neq0,
\)
so that $\be_2$ no longer defines an invariant wavevector direction.
Because the phase $\Phi$ is advected and may therefore be expressed as a
function of the Lagrangian labels, the initial choice $\Phi_0=x_2$
yields
\[
\Phi=-a_3=x_2-ut.
\]
Consequently, the wavevector evolves according to
\begin{equation}\label{eq:xi_shear}
  \bxi(t)
  =\be_2-t\,\nabla u(a_1,a_2),
  \qquad
  \nabla u
  =\partial_{a_1}u\,\nabla a_1
   +\partial_{a_2}u\,\nabla a_2 .
\end{equation}
Thus, unless $\nabla u=0$, the wavevector acquires an in-plane component
that grows linearly in time, progressively tilting it away from the
transverse direction. This behaviour is particularly transparent when $u=u(a_2)$. In that
case,
\[
\nabla u=u'(a_2)\nabla a_2,
\]
and $\bxi$ becomes asymptotically aligned with the isopycnal normal
$\bn(\varphi)$. Since the direction of $\bn(\varphi)$ oscillates along
each orbit, the coefficients of the amplitude equations are no longer
strictly periodic, but only asymptotically so. In particular, the
projector
\(
(\bxi\bxi^T)/|\bxi|^2
\)
evolves from the transverse projector $\be_2\be_2^T$ towards an
oscillating in-plane projector. This is the analogue, in the present
setting, of the secular tilting of wavevectors familiar from sheared
flows. The resulting non-periodic amplitude problem falls outside the
Floquet framework used here and requires a separate asymptotic
analysis; we therefore leave it open. For related results on linear
surface waves propagating over vertically sheared currents, see, for
example, \citet{KC89}.

The potential-vorticity structure also distinguishes the possible
forms of transverse shear. If $u=u(a_2)$, the transverse-shear
vorticity is tangent to the isopycnals, and the Ertel potential
vorticity of the basic state remains zero. By contrast, a dependence of
$u$ on $a_1$ generally introduces a vorticity component across the
isopycnals, so that the basic Ertel potential vorticity no longer
vanishes.
\section{Physical estimates}\label{sec:physics}
\subsection{Ocean}
For waves in the ocean, typically $\lambda=100$\,m, so that $\Omega=(gk)^{1/2}\approx0.79\,\mathrm{s^{-1}}$. Even a strong pycnocline, $N_0=10^{-2}\,\mathrm{s^{-1}}$, gives $\nu_0^2\approx1.6\times10^{-4}$, and the compressibility contribution $g^2/C^2\approx4\times10^{-5}\,\mathrm{s^{-2}}$ to $N_0^2$ is smaller still. Compressibility affects long waves, whose phase speed it reduces through the coupling with hydro-acoustic modes \citep{AKB15,AK17,AKK19}, but its contribution to $\nu_0^2$ is negligible here. By \S~\ref{cor:weak} the threshold moves by $6\times10^{-5}$, so Leblanc's criterion $H_w/\lambda>1/(3\pi)$ applies to Godin's waves in the ocean. Individual ocean waves reach $kH/2=\pi H_w/\lambda\approx0.44$--$0.55$ \citep{TBOW10}, so the threshold $\pi H_w/\lambda=1/3$ lies within the observed range of steepness \citep[for the onset of breaking, see][]{DK16,DKBGT20}.

\subsection{Atmosphere: where the resonance sits} Acoustic-gravity waves are ubiquitous near the mesopause, where they are observed in airglow and lidar data \citep{FA03}, and they can also be radiated into the upper atmosphere by ocean waves \citep{GZB15}. In \citet{P} the physically relevant mountain waves had $\nu^2\lesssim0.1$, and the resonance tongues were out of range. Godin's waves are different. By \S~\ref{cor:poly} an isothermal layer has $\nu_0^2=\kappa\lambda\cos\alpha/(2\pi H)$, and for $\alpha=0$ the orbital frequency exceeds the buoyancy frequency exactly when $\nu_0<1$, i.e.\ for $\lambda<2\pi H/\kappa\approx22\,H$. This is the regime of the fast waves considered in \citet{G}, and it contains the principal subharmonic resonance $\nu_0=1/2$, at $\lambda\approx5.5\,H$ (\S~\ref{cor:poly}).\\
\\
\citet[\S5]{G} matched the observed wavelength $\lambda=35\pm2$\,km, phase speed $160\pm10$\,m\,s$^{-1}$ and period $215\pm10$\,s to the dispersion relation $k(c+v)^2=g$ with a wind component of about $-71$\,m\,s$^{-1}$ along the direction of propagation, i.e.\ opposing it, so that the intrinsic phase speed is $(g/k)^{1/2}\approx230$\,m\,s$^{-1}$; indeed $k\,(160+71)^2\approx9.58$\,m\,s$^{-2}$, close to $g\approx9.53$\,m\,s$^{-2}$ at $90$\,km. Then $\Omega\approx0.041\,\mathrm{s^{-1}}$, an intrinsic period of about $152$\,s. An isothermal layer at $180$--$200$\,K, typical of the mesopause \citep{She00}, would give $\nu_0^2\approx0.26$--$0.29$ by \S~\ref{cor:poly}, i.e.\ $\nu_0\approx0.51$--$0.54$, right at the principal resonance. The sodium (Na) lidar profiles of temperature and wind measured by \citet[figure~3]{H02} during the observation allow a sharper statement (figure~\ref{fig:hecht}; the data and their processing are described in Appendix~\ref{app:hecht}).
\begin{figure}
\centering
\includegraphics[width=\textwidth]{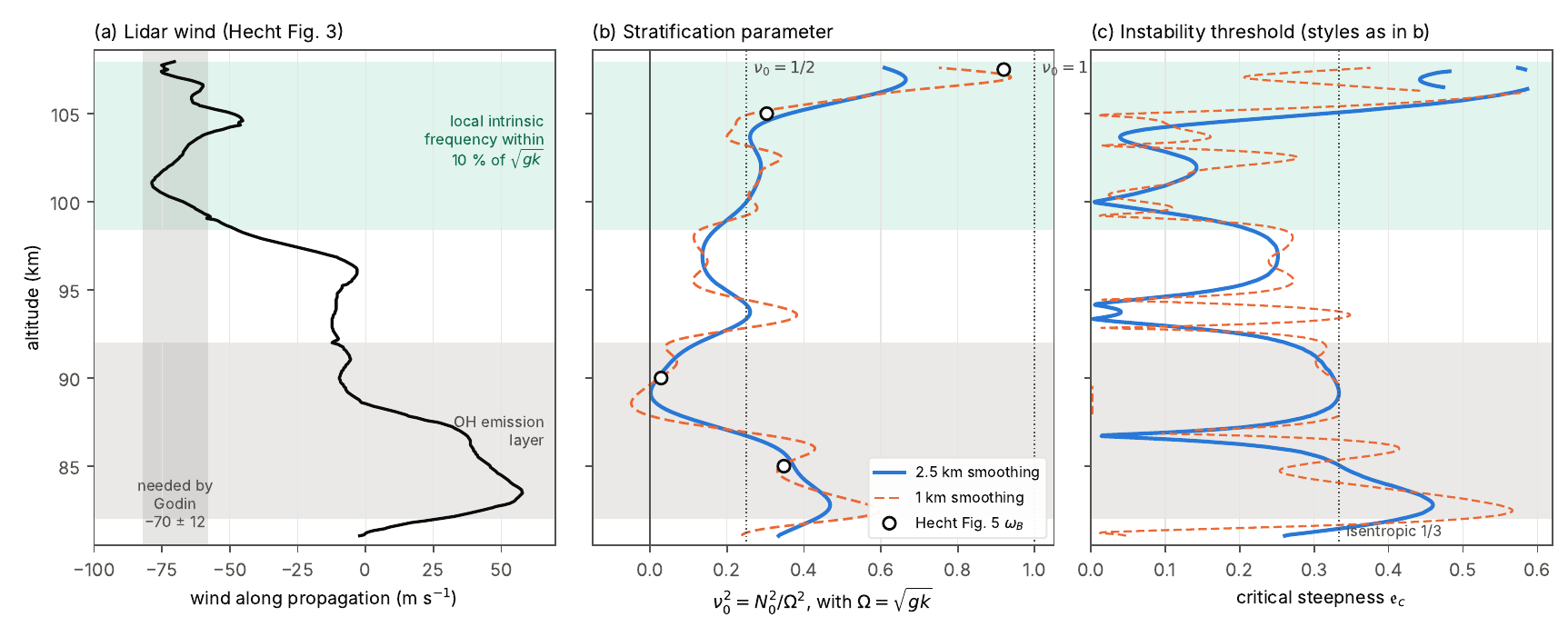}
\caption{The wave of \citet{H02} in Godin's interpretation, from the lidar profiles of their figure~3 (Appendix~\ref{app:hecht}). In all panels the green band marks the altitudes where the local intrinsic frequency $k(c-u)$ is within $10\,\%$ of $(gk)^{1/2}$, and the grey horizontal band is the hydroxyl (OH) emission layer, where the wave was observed. (\textit{a}) Wind along the direction of propagation; the grey vertical band is the wind $-70\pm12$\,m\,s$^{-1}$ required by the dispersion relation \eqref{eq:disp}. (\textit{b}) $\nu_0^2=N_0^2/\Omega^2$, from the temperature smoothed over 2.5\,km (solid) and 1\,km (dashed); circles: buoyancy frequencies of \citet[figure~5]{H02}; dotted lines: $\nu_0=1/2$ and $\nu_0=1$. (\textit{c}) Critical steepness $\e_c$, with the same line styles; dotted line: isentropic value $1/3$.}
\label{fig:hecht}
\end{figure}
They show that Godin's identification is locally consistent only between about 98 and 108\,km. There the wind along the direction of propagation is $-79$ to $-47$\,m\,s$^{-1}$, close to the value $-70\pm12$\,m\,s$^{-1}$ required by \eqref{eq:disp} at 100\,km, and the local intrinsic frequency $k(c-u)$ is within $10\,\%$ of $\Omega$. In the OH emission layer (82--92\,km), where the wave was imaged, the wind is $-12$ to $+58$\,m\,s$^{-1}$ and the intrinsic frequency is only $0.44$--$0.75\,\Omega$, so the motion there is not of Godin's type. Between 99 and 104\,km the measured temperature gives $\nu_0^2=0.20$--$0.29$, with median 0.27 (0.20--0.34 if the profile is smoothed over 1\,km instead of 2.5\,km): the principal subharmonic resonance $\nu_0=1/2$, where the buoyancy period is twice the intrinsic period, lies inside the layer. The critical steepness there has median 0.09 and does not exceed 0.16 (0.12 and 0.28 with the 1\,km smoothing), and it vanishes where $\nu_0^2=1/4$. A steepness of 0.2 throughout the layer would make it unstable everywhere (83\,\% of it with the 1\,km smoothing), with e-folding times of 6--9\,min, shorter than the 18\,min during which the wave was observed; for a steepness of 0.3 the e-folding time drops to about 3\,min. For $\e\lesssim0.05$, on the other hand, the instability is confined to sublayers with $|\nu_0^2-1/4|\lesssim0.013$, and its e-folding time exceeds 30\,min.

The published data do not determine $\e$ near 100\,km. The full-wave model of \citet{H02}, normalised to the observed $2\,\%$ fluctuation of the airglow intensity $I$, gives $\e<0.02$ there (Appendix~\ref{app:hecht}), too small for the instability to matter. If instead the wave were of Godin's type down to the OH layer, every parcel would keep its pressure, density and temperature, so that the intensity would change only because the emitting layer is displaced and alternately stretched and compressed, $I'/\bar I\approx\e$ to first order. A $2\,\%$ fluctuation at 87\,km would then give $\e\approx0.2$ at 100\,km, since $\e=e^{ka_2}$ grows by a factor $e^{k\times13\,\mathrm{km}}\approx10$ between the two altitudes. The unaveraged lidar profiles could settle the question: near 100\,km such a wave would make the horizontal wind oscillate with amplitude $\e c_*\approx45$\,m\,s$^{-1}$ and the sodium layer move up and down by $\e/k\approx1$\,km, with the observed period, while the temperature would change by only 1--4\,K, since each parcel keeps its temperature.

\section{Discussion and conclusions}\label{sec:discussion}
We have shown that the stability of Godin's exact acoustic-gravity waves to short-wavelength perturbations with wavevector normal to the plane of motion reduces, at each level, to a Floquet problem that depends only on the steepness $\e$ and on the stratification parameter $\nu_0^2=\cos^2\!\alpha\,N_0^2/\Omega^2$, where $N_0$ is the buoyancy frequency of the hydrostatic reference state. In isentropic fluids a level is unstable if and only if $\e>1/3$, the threshold found by \citet{L04} for Gerstner's wave, so that compressible surface and edge waves inherit the instability of their incompressible counterparts: a surface wave is unstable for $H_w/\lambda>1/(3\pi)\approx0.106$, a height below that of the highest Stokes wave, and an edge wave as soon as the shoreline oscillates by more than $\lambda\sin s/(6\pi)$ on a beach of slope $s$. Weak stable stratification lowers the critical steepness, which vanishes at the parametric resonances $\nu_0=n/2$; in an isothermal atmosphere the principal one, $\nu_0=1/2$, falls at $\lambda\approx5.5\,H/\cos\alpha$, in the range of the fast acoustic-gravity waves considered by \citet{G}. For the wave observed by \citet{H02}, the lidar temperature places the layer in which Godin's interpretation is consistent on this resonance, where the critical steepness is typically $0.1$ instead of $1/3$ and a steepness of $0.2$ would give e-folding times of 6--9\,min, shorter than the 18\,min of the observation (Section~\ref{sec:physics}). The published data do not determine the steepness of the wave there; the unaveraged lidar profiles could settle the question.

The short-wavelength method is particularly suited to exact solutions given in Lagrangian form \citep{L04,CG,IK14,P26}. Godin's class of incompressible motions of compressible fluids is exactly the class in which its compressible version closes along trajectories, so the Gerstner-type members of the class inherit the instability of \citet{P} level by level: as for the mountain waves of \citet{C23}, the problem reduces to a linear system of ordinary differential equations with periodic coefficients, the amplitude system of \citet{P} with $\nu^2=(1-\e^2)\nu_0^2$, in which uniform currents, the phase speed and the equation of state enter only through $\nu_0^2$. Compared with \citet{P}, three elements are new: the parametrisation by the hydrostatic reference state, natural in Godin's construction, which turns isothermal layers into horizontal lines of the stability diagram; the effective gravity $g\cos\alpha$ of edge waves, whose along-wall component is absorbed by the perturbation pressure and whose inclination reduces the buoyancy by $\cos^2\!\alpha$; and the physical relevance of the resonance tongues, which were out of range for the mountain waves of \citet{P} but contain the fast acoustic-gravity waves, whose $\nu_0$ is of order $1/2$.

The mechanism of the instability is clearest in the form given by \citet[Prop.~10]{P}. The growing and decaying Floquet modes carry no potential vorticity: along a solution with $\mathbf E(2\pi)=\mu\,\mathbf E(0)$ and $\mu\neq1$, the conservation of $\mathcal I$, which is linear in $\mathbf E$, forces $\mathcal I=0$. On these solutions the amplitude $\mathcal A:=\bp\cdot\tV$ obeys the Hill equation
\begin{equation}\label{eq:hill}
\Bigl(\frac{\mathcal A'}{h}\Bigr)'+Q(\theta)\,\mathcal A=0,\qquad
Q(\theta)=\frac{1}{\cD}\left(\nu^2-\frac{4\e^3(\e-\cos\theta)}{h(\theta)^2}\right),
\end{equation}
where $h(\theta):=|\bp(\theta)|^2=1-2\e\cos\theta+\e^2$; its discriminant is $\Delta$. The stiffness $Q$ combines the buoyancy, $\nu^2/\cD$, with a contribution of the wave, which is modulated at the orbital frequency and whose mean is restoring:
\begin{equation}\label{eq:Qmean}
\bar Q:=\frac{1}{2\pi}\int_0^{2\pi}Q\,\mathrm d\theta=\frac{1}{\cD}\left(\nu^2+\frac{4\e^4}{\cD^2}\right).
\end{equation}
If $Q\le0$ along the whole orbit, which happens if and only if $\nu^2\le-4\e^3/(1-\e)^3$, the instability is static, with $\Delta>2$ and positive multipliers. Otherwise it is parametric: the stiffness is restoring on average, and the growth is produced by its periodic modulation along the orbit. When $\Delta<-2$, as in the principal resonance of the Mathieu equation, the growing solution reverses its sign at every period and oscillates at half the orbital frequency: this is the subharmonic parametric resonance found in isentropic fluids and in the tongue issuing from $\nu_0=1/2$, whereas the tongue issuing from $\nu_0=1$, with $\Delta>2$, is a harmonic resonance. The sign of $\bar Q$ alone does not decide stability, since $\bar Q>0$ is compatible with instability (Mathieu equation) and $\bar Q<0$ with stability (Kapitza's inverted pendulum). In particular, the lower edge $\e_a$ of the stable band is where the mean restoring effect of the wave, $4\e^4/\cD^3$, balances the negative buoyancy stiffness $\nu^2/\cD$: there the orbital motion stabilises the convective mode dynamically, as in Kapitza's pendulum.

The results are sufficient conditions for instability. Only the wavevector $\be_2$ has been used; in-plane and oblique wavevectors may enlarge the unstable set, and were not examined in \citet{P} either. Sheared transverse currents (Section~\ref{sec:currents}), the vortex sheet of the discontinuous-current solution, and diabatic effects are outside the present analysis. As in \citet{P}, the instability tilts and stretches the two-dimensional trochoidal flow and is a natural seed for three-dimensional breakdown; for edge waves it is concentrated at the shoreline, and for surface waves in a layer beneath the free surface.

 \vskip 5mm
\noindent{\bf Acknowledgements.} The author used Claude Opus 5.5 (Anthropic) for the digitisation of the published profiles of \citet{H02} as well as to write and check the Python code for the subsequent numerical computations. The author checked all results and takes full responsibility for the content of the paper.

\vskip 2mm
\noindent{\bf Declaration of interest.}  The author declares no conflict of interest.   
 
 \vskip 2mm
\noindent{\bf Data availability statement.} {The data and code that support the findings of this study are openly available on Zenodo at \url{https://doi.org/10.5281/zenodo.23661634}. The deposit contains the temperature and wind profiles digitised from figure~3 of \citet{H02}, the other values taken from that article with their location in it, and the Python code that computes the stability diagram, the profiles of figure~\ref{fig:hecht} and every number quoted in the paper. The digitisation script requires a copy of the published article of \citet{H02}, which is not redistributed.}
\vskip 2mm
\noindent{\bf Author ORCiD.}  https://orcid.org/0009-0008-5454-0922

\vskip 2mm
\noindent{\bf Funding}This research received no specific grant from any funding agency, commercial or
not-for-profit sectors.

 \appendix
\section{Computation of the stability diagram}\label{app:numerics}
The discriminant $\hat\Delta(\e,\nu_0^2):=\Delta\bigl(\e,(1-\e^2)\nu_0^2\bigr)$ was computed by integrating over one period the Hill equation \eqref{eq:hill}, written as the first-order system $\mathcal A'=h\mathcal B$, $\mathcal B'=-Q\mathcal A$, whose discriminant is $\Delta$; integrating directly the three-dimensional system \eqref{eq:sysP} gives the same values to $10^{-10}$. We used the classical fourth-order Runge--Kutta scheme with 3000 steps per period, on a grid of $360\times381$ points covering $0.002\le\e\le0.95$ and $-0.4\le\nu_0^2\le1.5$. figure~\ref{fig:diagram}(\textit{a}) shows the growth rate $\sigma/\Omega=\operatorname{arcosh}(|\hat\Delta|/2)/(2\pi)$ where $|\hat\Delta|>2$, and the contour $|\hat\Delta|=2$; figure~\ref{fig:diagram}(\textit{b}) shows, for each $\nu_0^2\ge0$, the smallest steepness of the grid at which $|\hat\Delta|>2$, against $\lambda/H=2\pi\nu_0^2/\kappa$. The values quoted in Section~\ref{sec:numerics} were computed with 4000--8000 steps per period, on grids in $\e$ of spacing at most $2.5\times10^{-4}$ or by bisection on $\hat\Delta\pm2=0$. The critical steepness of figure~\ref{fig:hecht}(\textit{c}) was interpolated in a table of $\hat\Delta$ computed in the same way (Appendix~\ref{app:hecht}).

The scheme reproduces the exact discriminant \eqref{eq:Delta0} of neutral levels to $10^{-12}$, and centred finite differences reproduce the derivatives \eqref{eq:derivs} at $(1/3,0)$ to eight digits. As an independent check of the reduction of Section~\ref{sec:reduction}, the unreduced system \eqref{eq:wkb} was integrated in physical time along the trajectories \eqref{eq:paths}, for surface and edge waves, with uniform currents and with isothermal and polytropic reference states: the trace of its monodromy matrix, minus one, agrees with $\hat\Delta$ to about $10^{-11}$. The code is available as described in the data availability statement.

\section{Data of \texorpdfstring{\citet{H02}}{Hecht et al. (2002)} and derived profiles}\label{app:hecht}

\subsection{Data taken from \texorpdfstring{\citet{H02}}{Hecht et al. (2002)}}
Section~\ref{sec:physics} uses only published material from \citet{H02}. From the text we take the observed horizontal wavelength $\lambda=35\pm2$\,km, period $215\pm10$\,s and phase speed $c=160\pm10$\,m\,s$^{-1}$ (their \S\,3.1), the direction of propagation, towards $30^\circ$ east of south (\S\,3.2), the amplitude of the OH airglow fluctuation, about $2\,\%$ of the mean (\S\,3.1), the altitude of the OH emission layer, about 10\,km thick and centred near 87\,km (\S\,4.1.2), and the time during which the wave was observed, 09:54--10:12\, universal time (UT) on 17 November 1999 (\S\,3.1). From their figure~3 we take the Na lidar profiles of temperature and of the wind component along the direction of propagation between 80 and 108\,km, measured between 9.86 and 10.10\,UT and smoothed over 960\,m by the authors. From their figure~5 we read the buoyancy frequency and the intrinsic frequency of the wave at 85, 90, 105 and 107.5\,km, which are used only as a check. From their figures~6 and~8 we read the amplitudes of their full-wave solution, which enter only the discussion of the steepness. Each of these values is listed, with its location in \citet{H02}, in the data files that accompany this paper.

\subsection{Digitisation}
figure~3 of \citet{H02} is embedded in the published PDF file as a bitmap of $1804\times602$ pixels at 300 dots per inch, which we extracted without resampling. The axes were calibrated on the plot frames and the major tick marks, so that one pixel corresponds to 0.085\,km in altitude, 0.38\,K in temperature and 0.77\,m\,s$^{-1}$ in wind. The temperature curve, single-valued in altitude, was read row by row as the mean position of its pixels. In the wind panel, the solid line, which gives the component along the direction of propagation, was separated from the dashed zonal and meridional components as the largest connected set of pixels, and was followed upwards by choosing in each row the cluster of pixels that minimises the total squared horizontal displacement. Rows next to the frames were skipped, so that the tick marks are not taken for the curves. The temperature profile contains 318 points between 81.0 and 108.0\,km, and the wind profile 319 points between 81.0 and 108.1\,km. We estimate the error of the digitisation at about 1\,K and 2\,m\,s$^{-1}$, smaller than the uncertainty of the lidar data.

\subsection{Derived profiles}
The digitised profiles were interpolated linearly on a grid of spacing 0.05\,km over their common range, 81.0--108.0\,km. The local intrinsic frequency is $\omega_I=k(c-u)$, with $k=2\pi/\lambda$ and $u$ the wind along the direction of propagation, and the frequency of Godin's wave is $\Omega=(gk)^{1/2}$, with $g=g_0R_E^2/(R_E+z)^2$, $g_0=9.80665$\,m\,s$^{-2}$ and $R_E=6371$\,km. The buoyancy frequency of the reference state was computed as
\begin{equation}\label{eq:N0lidar}
N_0^2=\frac{g}{T}\left(\frac{\mathrm dT}{\mathrm dz}+\frac{g}{c_p}\right),\qquad c_p=1004.6\ \mathrm{J\,kg^{-1}\,K^{-1}},
\end{equation}
after smoothing the temperature with a Gaussian filter of full width at half maximum 1\,km or 2.5\,km; the second is the smoothing used by \citet{H02} for the background of their model. Then $\nu_0^2=N_0^2/\Omega^2$, since $\alpha=0$. The critical steepness at each altitude is the smallest $\e$ for which $|\Delta(\e,(1-\e^2)\nu_0^2)|>2$. It was obtained by interpolation in a table of the discriminant computed as in Appendix~\ref{app:numerics}, on 475 values of $\e$ in $[0.002,0.95]$ and 281 values of $\nu_0^2$ in $[-0.2,1.2]$, with 2000 Runge--Kutta steps per period, so that its resolution is 0.002.

The derived profiles reproduce the frequencies read from figure~5 of \citet{H02}. In that figure the intrinsic frequency is marked by a character X anchored at its lower-left corner (its centre lies at $k=1.86\times10^{-4}$\,m$^{-1}$ in all four panels, instead of $2\pi/\lambda=1.80\times10^{-4}$\,m$^{-1}$), and we read it at the bottom of the character. The intrinsic frequencies then agree with ours to within $3\,\%$ at the four altitudes. The values of $\nu_0^2$ obtained from their buoyancy frequencies differ from ours (2.5\,km smoothing) by 0.03 at 85\,km, 0.01 at 90\,km and 0.06 at 105\,km. At 107.5\,km, near the upper end of the lidar data, our value is lower, 0.62 against 0.92, and the derived profiles are therefore not used above 107.6\,km.

The amplitudes of the full-wave solution of \citet{H02} quoted in Section~\ref{sec:physics} come from their figure~8. Their energy density is $0.82\times10^{-4}$\,J\,m$^{-3}$ at 81\,km, where the ratio of potential to kinetic energy is 0.40, and it falls below the resolution of the figure, $0.03\times10^{-4}$\,J\,m$^{-3}$, above 95\,km; at 100\,km the ratio is 0.08. With their equation~(8) and the densities of the U.S. Standard Atmosphere \citep{USSA76}, this gives a velocity amplitude of about 3.9\,m\,s$^{-1}$ at 81\,km and below 4.5\,m\,s$^{-1}$ at 100\,km, that is a steepness $\e=|u|/(g/k)^{1/2}<0.02$ there.

\begin{thebibliography}{99}

\expandafter\ifx\csname natexlab\endcsname\relax
\def\natexlab#1{#1}\fi
\expandafter\ifx\csname selectlanguage\endcsname\relax
\def\selectlanguage#1{\relax}\fi




\bibitem[{Abdolali \& Kirby(2017)}]{AK17}
\textsc{Abdolali, A. \& Kirby, J.T.} 2017 Role of compressibility on tsunami propagation. \emph{J. Geophys. Res. Oceans} \textbf{122} (12), 9780--9794. DOI: 10.1002/2017JC013054
 
\bibitem[{Abdolali et~al.(2019)Abdolali, Kadri \& Kirby}]{AKK19}
\textsc{Abdolali, A., Kadri, U. \& Kirby, J.T.} 2019 Effect of water compressibility, sea floor elasticity, and field gravitational potential on tsunami phase speed. \emph{Sci. Rep.} \textbf{9}, 16874.
 
\bibitem[{Abdolali et~al.(2015)Abdolali, Kirby \& Bellotti}]{AKB15}
\textsc{Abdolali, A., Kirby, J.T. \& Bellotti, G.} 2015 Depth-integrated equation for hydro-acoustic waves with bottom damping. \emph{J. Fluid Mech.} \textbf{766}, R1. DOI: 10.1017/jfm.2015.37




\bibitem[{Babenko(1987)}]{Babenko87}
\textsc{Babenko, K.I.} 1987 Some remarks on the theory of surface waves of finite amplitude. \emph{Dokl. Akad. Nauk SSSR} \textbf{294}, 1033--1037.


\bibitem[Constantin(2001)]{C2001}
{\sc Constantin, A.} 2001 Edge waves along a sloping beach
{\em J. Phys. A: Math. Gen.}, {\bf 34}, 9723---9731. DOI:  10.1088/0305-4470/34/45/311


\bibitem[Constantin(2023)]{C23}
{\sc Constantin, A.} 2023 Exact nonlinear mountain waves propagating upwards
{\em J. Phys. A: Math. Theor.}, {\bf 56}, 245702. DOI: 10.1088/1751-8121/acd429


\bibitem[Constantin \& Germain(2013)]{CG}
{\sc Constantin, A. \& Germain, P.} 2013 Instability of some equatorially trapped waves {\em J. Geophys. Res. Oceans}, {\bf 118}, 2802---2810.
DOI: 10.1002/jgrc.20219

 
\bibitem[{Derakhti \& Kirby(2016)}]{DK16}
\textsc{Derakhti, M. \& Kirby, J.T.} 2016 Breaking-onset, energy and momentum flux in unsteady focused wave packets. \emph{J. Fluid Mech.} \textbf{790}, 553--581.
 
\bibitem[{Derakhti et~al.(2020)Derakhti, Kirby, Banner, Grilli \& Thomson}]{DKBGT20}
\textsc{Derakhti, M., Kirby, J.T., Banner, M.L., Grilli, S.T. \& Thomson, J.} 2020 A unified breaking onset criterion for surface gravity water waves in arbitrary depth. \emph{J. Geophys. Res. Oceans} \textbf{125}, e2019JC015886. DOI: 10.1029/2019JC015886



\bibitem[{Fritts \& Alexander(2003)}]{FA03}
\textsc{Fritts, D.C. \& Alexander, M.J.} 2003 Gravity wave dynamics and effects in the middle atmosphere. \emph{Rev. Geophys.} \textbf{41} (1), 1003. DOI: 10.1029/2001RG000106

\bibitem[{Gill(1982)}]{Gill82}
\textsc{Gill, A.E.} 1982 \emph{Atmosphere--Ocean Dynamics}. Academic Press. 
 
\bibitem[Godin(2015)]{G}
{\sc Godin, O.A.} 2015 Finite-amplitude acoustic-gravity waves: exact solutions
{\em J. Fluid Mech.}, {\bf 767}, 52---64. DOI: 10.1017/jfm.2015.40


\bibitem[{Godin et~al.(2015)Godin, Zabotin \& Bullett}]{GZB15}
\textsc{Godin, O.A., Zabotin, N.A. \& Bullett, T.W.} 2015 Acoustic-gravity waves in the atmosphere generated by infragravity waves in the ocean. \emph{Earth Planets Space} \textbf{67}, 47. DOI: 10.1186/s40623-015-0212-4
 
\bibitem[{Gossard \& Hooke(1975)}]{GH75}
\textsc{Gossard, E.E. \& Hooke, W.H.} 1975 \emph{Waves in the Atmosphere}. Elsevier.



\bibitem[Hecht et al.(2002)]{H02}
{\sc Hecht, J.H., Walterscheid, R.L., Hickey, M.P., Rudy, R.J. \& Liu, A.Z.} 2002
An observation of a fast external atmospheric acoustic-gravity wave
{\em J. Geophys. Res.}, {\bf 107}, 4444. DOI: 1029/2001JD001438


 
\bibitem[{Hines(1960)}]{H60}
\textsc{Hines, C.O.} 1960 Internal atmospheric gravity waves at ionospheric heights. \emph{Can. J. Phys.} \textbf{38}, 1441--1481. DOI: 10.1139/p60-150



\bibitem[Ionescu-Kruse(2014)]{IK14}
{\sc Ionescu-Kruse, D.} 2014 Instability of edge waves along a sloping beach
{\em J. Differ. Equ.}, {\bf 256}, 3999---4012


\bibitem[{Kirby \& Chen(1989)}]{KC89}
\textsc{Kirby, J.T. \& Chen, T.-M.} 1989 Surface waves on vertically sheared flows: approximate dispersion relations. \emph{J. Geophys. Res.} \textbf{94} (C1), 1013--1027. DOI: 10.1029/JC094iC01p01013
 
  \bibitem[{Kirby et~al.(1981)Kirby, Dalrymple \& Liu}]{KDL81}
  \textsc{Kirby, J.T., Dalrymple, R.A. \& Liu, P.L.-F.} 1981 Modification of edge waves by barred-beach topography. \emph{Coastal Engng} \textbf{5}, 35--49.
 


\bibitem[Leblanc(2004)]{L04}
{\sc Leblanc, S.} 2004 Local stability of Gerstner's waves. {\em J. Fluid Mech.}, {\bf 506}, 245---254. DOI: 10.1017/S0022112004008444



\bibitem[Lebovitz \& Lifschitz(1992)]{LL92}
{\sc Lebovitz, N. \& Lifschitz, A.} 1992 Short wavelength instabilities of rotating compressible fluid masses.
{\em Proc. R. Soc. Lond. A}, {\bf 438}, 265---290. DOI: 10.1098/rspa.1992.0106

\bibitem[Lifschitz \& Hameiri(1991)]{LH91}
{\sc Lifschitz, A. \& Hameiri, E.} 1991 Local stability conditions in fluid dynamics.
{\em Phys. Fluids A}, {\bf 3}, 2644---2651. DOI: 10.1063/1.858153


\bibitem[NOAA, NASA \& USAF(1976)]{USSA76}
{\sc NOAA, NASA \& USAF} 1976
\newblock {\em U.S. Standard Atmosphere, 1976}.
\newblock Washington, DC: U.S. Government Printing Office.

\bibitem[Puntini(2026)]{P26}
{\sc Puntini, C.} 2026 Instability of the Halocline at the North Pole. {\em J. Phys. Oceanogr.}, {\bf 56}, 1499---1511. DOI: 10.1175/JPO-D-26-0029.1


\bibitem[Puntini(2027)]{P}
{\sc Puntini, C.} 2027 On the instability of some upward propagating, exact, nonlinear mountain waves. {\em Nonlinear Anal.: Real World Appl.}, {\bf 95}, 104781. DOI: 10.1016/j.nonrwa.2026.104781

\bibitem[{She et~al.(2000)She, Chen, Hu, Sherman, Vance, Vasoli, White, Yu \& Krueger}]{She00}
\textsc{She, C.Y., Chen, S., Hu, Z., Sherman, J., Vance, J.D., Vasoli, V., White, M.A., Yu, J. \& Krueger, D.A.} 2000 Eight-year climatology of nocturnal temperature and sodium density in the mesopause region (80 to 105\,km) over Fort Collins, CO (41$^\circ$N, 105$^\circ$W). \emph{Geophys. Res. Lett.} \textbf{27}. DOI: 10.1029/2000GL003825


 \bibitem[{Stokes(1880)}]{Stokes1880}
\textsc{Stokes, G.G.} 1880 Supplement to a paper on the theory of oscillatory waves. In \emph{Mathematical and Physical Papers}, vol.~1, pp.~314--326. Cambridge University Press.


\bibitem[{Toffoli et~al.(2010)Toffoli, Babanin, Onorato \& Waseda}]{TBOW10}
\textsc{Toffoli, A., Babanin, A., Onorato, M. \& Waseda, T.} 2010 Maximum steepness of oceanic waves: field and laboratory experiments. \emph{Geophys. Res. Lett.} \textbf{37}, L05603. DOI: 10.1029/2009GL041771




\bibitem[{Williams(1981)}]{Williams1981}
\textsc{Williams, J.M.} 1981 Limiting gravity waves in water of finite depth. \emph{Phil. Trans. R. Soc. Lond.} A \textbf{302}, 139--188.



\bibitem[Yih(1966)]{Yih}
{\sc Yih, C.-S.} 1966 Note on edge waves in a stratified fluid {\em J. Fluid Mech.}, {\bf 24}, 765-767





 \end{thebibliography}
\end{document}